
\documentclass[reqno]{amsart}
\usepackage{amsmath,amsfonts,amscd,amssymb}
\usepackage{mathtools}
\usepackage{dsfont}
\usepackage{lineno}
\usepackage{multirow}
\usepackage{tikz}
\usepackage{float}
\usetikzlibrary{decorations.pathreplacing}
\usetikzlibrary{backgrounds} 
\usepackage{placeins}
\usepackage[mathscr]{euscript}
\usepackage[small]{caption}
\usepackage{graphicx}
\usepackage{stmaryrd}
\usepackage{color}
\usepackage{changes}

\DeclareMathOperator*{\argminA}{arg\,min}

\makeatletter
\@addtoreset{equation}{section}
\makeatother

\newtheorem{theorem}{Theorem}[section]

\newtheorem{proposition}[theorem]{Proposition}
\newtheorem{corollary}[theorem]{Corollary}

\newtheorem{remark}[theorem]{Remark}

\newcounter{as}[section]

\definecolor{bblue}{rgb}{.2,0.2,.8}
\usepackage[foot]{amsaddr}

\begin{document}

\title[Parameter estimation in dynamical systems via Statistical Learning]{Parameter estimation in dynamical systems via Statistical Learning: a reinterpretation of Approximate Bayesian Computation applied to COVID-19 spread\thanks{Institute of Mathematics and Statistics, Universidade de S\~{a}o
		Paulo, R. do Mat\~{a}o, 1010 - Butant\~{a}, S\~{a}o Paulo - SP,
		05508-090, Brazil. E-mail: \texttt{dmarcondes@ime.usp.br}}}
\author{Diego Marcondes}
\begin{abstract}
We propose a robust parameter estimation method for dynamical systems based on Statistical Learning techniques which aims to estimate a set of parameters that well fit the dynamics in order to obtain robust evidences about the qualitative behaviour of its trajectory. The method is quite general and flexible, since it does not rely on any specific property of the dynamical system, and represents a reinterpretation of Approximate Bayesian Computation methods through the lens of Statistical Learning. The method is specially useful for estimating parameters in epidemiological compartmental models in order to obtain qualitative properties of a disease evolution. We apply it to simulated and real data about COVID-19 spread in the US in order to evaluate qualitatively its evolution over time, showing how one may assess the effectiveness of measures implemented to slow the spread and some qualitative features of the disease current and future evolution.
\\

\textbf{Keywords:} Epidemiological models, Statistical Learning, Approximate Bayesian Computation, COVID-19
\end{abstract}

\address{Institute of Mathematics and Statistics, Universidade de S\~{a}o
Paulo, R. do Mat\~{a}o, 1010 - Butant\~{a}, S\~{a}o Paulo - SP,
05508-090, Brazil. \\
e-mail: \texttt{dmarcondes@ime.usp.br}}
\maketitle

\section{Introduction}
\label{intro}

Dynamical systems are important tools in Applied Mathematics for modelling phenomena studied by many branches of science, such as physics, cosmology, biology, epidemiology, medicine, chemistry and engineering \cite{coley2013,jackson2015,ma2009,rosen1970,strogatz2001}, being applied to describe the deterministic evolution in time of certain processes, when starting from a fixed initial condition. The equations which govern this deterministic evolution are specific to each application, and depend on some unknown parameters related to the phenomena being modelled. For example, dynamical systems applied to physics and cosmology usually depend on universal constants, while compartmental models in epidemiology depend on rates related to the spread of infectious diseases, as their basic reproductive number \cite{capasso2008,vynnycky2010}. Hence, in order to model phenomena by dynamical systems, one should find parameters which \textit{represent} reality.

Some parameters, such as universal constants, are known, while others can be measured, but some are, at principle, completely unknown and need to be estimated. Such estimation is carried out from available data on the evolution of the process in a time interval, when one observes a partial trajectory of a system and try to determine the parameters which generate an evolution which fit the observed one. This problem may not have an unique solution, since the evolution in an interval may not be enough to identify an unique vector of parameters, and, even when the parameters are identifiable, there may not exist a computable algorithm capable of finding them (see \cite{canto2017,gu2013} for a discussion on identifiability of parameters).

In this scenario, there are three features usually present in methods for parameter estimation in dynamical systems: closed formulae which relates the parameters to observed values, as is the case of the basic reproduction number in epidemiological models \cite{wallinga2007}; an algorithm which randomly selects parameters from a set of candidates; and some measure of goodness-of-fit which evaluates if the evolution generated by each parameter well fit the observed one. There are countless methods for parameter estimation in dynamical systems which present these, and other features. We refer, for instance, \cite{ghahramani1996parameter,ghahramani1999learning,green2015,raue2015data2dynamics,xu2015application} and the references therein.

Among these methods, there is the class known as Approximate Bayesian Computation (ABC) which has been developed aiming to circumvent the calculation of likelihood functions when inferring the posterior distribution of parameters, what is usually computationally unfeasible or too costly to perform. In the simplest form of ABC, the ABC rejection sampler \cite{pritchard1999}, parameters $\theta$ are simulated from the prior distribution $\pi(\theta)$ and a system evolution $x$, sampled from $f(x|\theta)$, is compared with the observed one $x^{\star}$. If the evolutions are \textit{close enough}, i.e., $d(x,x^{\star}) < \epsilon$, the sampled parameter is accepted; otherwise it is rejected. The accepted parameters will be a sample of the distribution $\pi(\theta|d(x,x^{\star}) < \epsilon)$ which is an approximation of the posterior $\pi(\theta|x)$. 

Since the rejection rate of such an algorithm may be too great, demanding a very large sample of $\pi(\theta)$, more refined versions of this method, with higher acceptance rate, such as ABC Markov Chain Monte Carlo (MCMC) \cite{marjoram2003} and ABC Sequential Monte Carlo (SMC) \cite{toni2009}, have been proposed. These versions, and others in the literature (see \cite{beaumont2010,csillery2010,lintusaari2017,marin2012,sunnaaker2013} for a through review of ABC methods), follows more or less the same idea presented above, of sampling parameters from a prior distribution and rejecting them as samples of an approximation of the posterior if data generated by them does not \textit{well fit} observed data, and the main difference between the methods is how the prior distribution is sampled.

In this paper we propose a robust approach for parameter estimation in dynamical systems based on Statistical Learning techniques which is actually an ABC algorithm, but developed from another point of view. The method does not look for the vector of parameters which \textit{best} fit the observed evolution, but rather seeks to find a set of parameters which \textit{well fit}, in some sense, such evolution. This method is suitable when one wants to identify qualitative properties of the observed dynamics, rather than know it exactly, or when the evolution is only approximated by a dynamical system. Furthermore, since the dynamics may be sensitive to the choice of parameters, considering a set of parameters may mitigate such sensitivity, since properties satisfied by all of them may also be satisfied by the observed dynamics (see for example the role of parameter $s$ in \cite{peixoto2020}). Hence, considering a set of parameters and looking for common properties shared by the evolutions they generate account for the robustness of the method.

In the proposed approach, one chooses a set of candidate parameters, based on all prior knowledge about the modelled dynamics, and fix a binary rule which, given a candidate parameter, determine if its evolution well fit, or not, the observed dynamics. We say that a parameter is \textit{good} if its evolution well fit the dynamics, and is \textit{bad} otherwise. The method aims to find good parameters among candidate ones by sampling and testing. To this end, we apply Statistical Learning techniques to determine an upper bound for the sample size needed to find a proportion at least $1 - c, 0 < c < 1$, of the good parameters with high probability, which is a measure of the computational complexity of the approach.

The method is quite general, since it does not rely on any specific property of the dynamical system. Observe that the agenda the method seeks to carry out is much more humble than trying to identify exactly the value of the optimal parameter, since its aim is only to find good parameters in a pre-defined set. This is more plausible to be achieved in practice, specially in processes which are only approximated by dynamical systems or whose parameters change from time to time. Furthermore, since the method is an instance of ABC, much of the theory and algorithms developed to it may be employed to carry out the qualitative analysis of the proposed framework.

The Statistical Learning approach is specially suitable when modelling disease spread by compartmental epidemiological models since they are completely defined by a handful of real valued parameters and there is a great interest in qualitative properties of their trajectory, rather than knowing it exactly. We illustrate in an example with real data about the spread of COVID-19 in the US how this method can be quite useful in obtaining qualitative information about the evolution of the process. Indeed, we will see how the parameters of a SEIR model change from time to time due to the enforcement of measures to slow the disease spread, demanding a constant fit of them which in itself can provide evidences about the disease spread and the effectiveness of measures to overcome it. 

Although there is a vast application of ABC for compartmental epidemiological models \cite{brown2018}, and for COVID-19 spread in special \cite{arenas2020,de2020emergence}, we perform what we believe is a novel application of it, in which the parameters of the model are estimated weekly and their behaviour over time presents invaluable information about the effectiveness of measures implemented to slow the spread, and some qualitative features of the disease current and future evolution.

In Section \ref{Sec2} we discuss the problem of robustly estimating parameters in dynamical systems from their evolution in an interval. In Section \ref{Sec3} we propose a method based on Statistical Learning techniques to perform such robust estimation. In Section \ref{Sec4} we present a couple of examples of the proposed method for a SIR model on a simulated dataset, and for a SEIR model for the spread of COVID-19 in the US. In Section \ref{Sec5} we give our final remarks. 

\section{Robust parameter estimation in dynamical systems}
\label{Sec2}

Let $\{\{X_{\theta}(t)\}_{t \in \mathbb{N}}:\theta \in \Theta \}$ be a family of discrete time dynamical systems, with a same metric phase space $(\Omega,d)$, indexed by parameters $\theta \in \Theta$. The parametric space $\Theta$ may be such that $\Theta \subseteq \mathbb{R}^{d}, d \geq 1$, when the parameters are time independent, or $\Theta \subseteq \mathbb{N} \times \mathbb{R}^{d}$, when the parameters are time dependent. We assume that the initial condition is the same for all systems in the family: $X_{\theta}(0) = x_{0} \in \Omega$ for all $\theta \in \Theta$.

The problem of parameter estimation in dynamical systems we consider in this paper is characterized when one observes the time evolution of a process $X_{\theta^{\star}}(t)$ for $t \in \{1,\dots,T\}$, with a fixed $T \geq 1$, and wants to identify the parameter $\theta^{\star} \in \Theta$ which generated such evolution. On the one hand, for $T \in \mathbb{N}$, the map
\begin{linenomath}
	\begin{align*}
	&\phi_{T}: \Theta \to \Omega^{T} \\
	& \phi_{T}(\theta) = \{X_{\theta}(t)\}_{t=1}^{T} 
	\end{align*} 
\end{linenomath}
which maps each parameter $\theta \in \Theta$ to the evolution of $X_{\theta}(t)$ until $t = T$, is in general not invertible, i.e., $\theta$ is not identifiable, so the evolution until a time $T$ does not define uniquely the system. Furthermore, the set $\phi^{-1}_{T}(\{X_{\theta^{\star}}(t)\}_{t=1}^{T})$ may not be computable in practical problems, so one cannot identify a subset of candidate parameters by inverting such a map. On the other hand, it is not computationally feasible to test all parameters in $\Theta$, in case it has infinite elements, or to perform an efficient grid search of $\Theta$, when it is multidimensional, to find candidate parameters.

Precisely estimating parameters of a dynamical system from the time evolution in an interval may have an intrinsic lack of robustness due to the sensitivity of the evolution on parameter $\theta$ (see \cite{wu2013} for a review of sensitivity analysis in compartmental models). This implies that, even if we estimate $\theta^{\star}$ by a $\hat{\theta}$ \textit{close} to $\theta^{\star}$, $d\left(X_{\theta^{\star}}(t), X_{\hat{\theta}}(t)\right)$ may be too great for $t > T$, rendering the estimative useless for the problem at hand, specially when the aim of estimating parameters of a dynamical system is predicting exactly its trajectory.  

A manner of increasing the robustness of the estimation method is, rather than estimating $\theta^{\star}$ precisely by a $\hat{\theta}$, to identify a set $\hat{\Theta} \subseteq \Theta$ of possible values for $\theta^{\star}$ which will generate evolutions of the system: $\{\{X_{\theta}(t)\}_{t \in \mathbb{N}}: \theta \in \hat{\Theta}\}$. Establishing common properties of these evolutions, one may obtain evidences about the behaviour of $X_{\theta^{\star}}(t)$ for $t > T$. Even though the chosen models are still sensible to the parameters, there may be some qualitative behaviour, common to all of them, which may also be shared with the evolution generated by $\theta^{\star}$, so this method adds to one understanding of the system trajectory.

The chosen models should \textit{well fit} the observed evolution until time $T$. The definition of \textit{well fit} is problem dependent and shall be given by a fitness map
\begin{linenomath}
	\begin{equation*}
	F: \Theta \times \Omega^{T} \to \{0,1\}
	\end{equation*}
\end{linenomath}
which, for a $\theta \in \Theta$ and observed evolution in $\Omega^{T}$, attributes $1$ if the system generated by $\theta$ \textit{well fit} the observed evolution, and attributes $0$ otherwise. An example of fitness map is

\begin{linenomath}
	\begin{align}
	\label{example_FM}
	F(\theta,\{X_{\theta^{\star}}(t)\}_{t=1}^{T}) = \min\limits_{t \in \{1,\dots,T\}}  \mathds{1}\Big\{d(X_{\theta}(t),X_{\theta^{\star}}(t)) \leq \delta(t,X_{\theta^{\star}}(t))\Big\}
	\end{align}
\end{linenomath}
for $\delta(t,X_{\theta^{\star}}(t)) > 0$, dependent on $t$ and $X_{\theta^{\star}}(t)$. By \eqref{example_FM}, a model well fit $\{X_{\theta^{\star}}\}_{t=1}^{T}$ if its time evolution is relatively close to the observed one for all $t \leq T$. Another example of fitness map is
\begin{linenomath}
	\begin{align}
	\label{example_FM2}
	F(\theta,\{X_{\theta^{\star}}(t)\}_{t=1}^{T}) = \mathds{1}\Big\{\frac{1}{T} \sum_{t=1}^{T} d(X_{\theta}(t),X_{\theta^{\star}}(t)) \leq \delta\Big\}
	\end{align}
\end{linenomath}
for a constant $\delta > 0$, so a model well fit the trajectory if its mean distance to it over all times until $T$ is lesser than $\delta$. The fitness map is problem dependent and should be chosen according to the purpose of the dynamical system.

The estimation method for dynamical systems parameters proposed in this paper aims to find a subset $\hat{\Theta} \subseteq \Theta$ of parameters such that $F(\theta,\{X_{\theta^{\star}}(t)\}_{t=1}^{T}) = 1$ for all $\theta \in \hat{\Theta}$. As an alternative to invert map $\phi_{T}$ or perform a grid search on $\Theta$, we will propose a probabilistic consistent method to randomly select parameters from $\Theta$ and take $\hat{\Theta}$ as the ones with goodness-of-fit according to a fitness map $F$. This method is based on Statistical Learning techniques \cite{devroye1996,vapnik1998,vapnik2000} and is a reinterpretation of ABC methods.

\begin{remark}
	In order to develop the method we assume there is no noise in the system, what is equivalent to the existence of a $\theta^{\star} \in \Theta$ which generated the observed data $X_{\theta^{\star}}(t)$. However, the framework may also be applied to cases where there is noise and the observed dynamics $X(t)$ is not exactly equal to $X_{\theta^{\star}}$ for a $\theta^{\star} \in \Theta$.
\end{remark}

\begin{remark}
	Goodness-of-fit, as defined above, is a dichotomous concept: a model either does, or does not, fit the observed evolution. Even if we considered some likelihood function or distance between the observed and generated evolution as numerical measures of goodness-of-fit, we would have to choose a threshold for it in order to obtain $\hat{\Theta}$. Hence, the class of fitness maps contemplate these numerical goodness-of-fit measures, since they may be composed by such measures, as in example \eqref{example_FM}. Therefore, dichotomous goodness-of-fit are enough for our purposes.
\end{remark}

\begin{remark}
	The development of ABC is based on the avoidance to calculate likelihood functions when inferring posterior distributions, while to develop this method we depart from candidate parameters and want to find models which well fit data. Although with slightly different purposes, ABC and the proposed method are analogous and may return the same result if performed in consonance (cf. Section \ref{relABC}).
\end{remark}

\section{Parameter estimation via Statistical Learning}
\label{Sec3}

\subsection{Statistical Learning}

The Statistical Learning method under the Empirical Risk Minimization (ERM) paradigm, restricted to classification problems, may be stated as follows. Let $Z$ be a random vector, and $Y$ a random variable, defined on a same probability space $(\Lambda,\mathcal{S},\mathbb{P})$, with ranges $\mathcal{Z} \subset \mathbb{R}^{d}, d \geq 1,$ and $\{0,1\}$, respectively. Denote $P(z,y) \coloneqq \mathbb{P}(Z \leq z,Y \leq y)$ as the joint probability distribution of $(Z,Y)$ at point $(z,y) \in \mathcal{Z} \times \{0,1\}$, which we assume unknown, but fixed. Define a sample $\mathcal{D}_{N} = \{(Z_{1},Y_{1}), \dots, (Z_{N},Y_{N})\}$ as a sequence of independent and identically distributed random vectors, defined on $(\Lambda,\mathcal{S},\mathbb{P})$, with joint distribution $P$.

Let $\mathcal{H}$ be a set of functions with domain $\mathcal{Z}$ and image $\{0,1\}$, whose typical element we denote by $h: \mathcal{Z} \to \{0,1\}$. We call $\mathcal{H}$ Hypotheses Space. For each hypothesis $h$ in $\mathcal{H}$, we assign a value indicating the error incurred by the use of such hypothesis to predict $Y$ from the value of $Z$, i.e., how well $h(Z)$ predicts $Y$. To obtain such an error measure, we consider the simple loss function $\ell: \mathcal{Z} \times \{0,1\} \times \mathcal{H} \mapsto \mathbb{R}$ given by
\begin{linenomath}
	\begin{equation*}
	l(z,y;h) = \begin{cases}
	1, & \text{ if } h(z) \neq y\\
	0, & \text{ if } h(z) = y
	\end{cases}
	\end{equation*}
\end{linenomath}
for $(z,y,h) \in \mathcal{Z} \times \{0,1\} \times \mathcal{H}$. In this framework, the loss of predicting $y$ by $h(z)$ is either zero if $h(z) = y$, or one if $h(z) \neq y$. The prediction error, known  in the literature as out-of-sample error, risk or loss \cite{abu2012,devroye1996,vapnik1998,vapnik2000}, of a hypothesis $h \in \mathcal{H}$ is defined as
\begin{linenomath}
	\begin{equation*}
	L(h) \coloneqq \mathbb{E}[\ell(Z,Y;h)] =  \mathbb{P}\left(h(Z) \neq Y\right)
	\end{equation*}
\end{linenomath}
in which $\mathbb{E}$ means expectation under $\mathbb{P}$. This is the probability of error incurred when hypothesis $h$ is used to predict $Y$ from $Z$.

The out-of-sample error is fixed, but unknown, as is $P$. Therefore, in order to asses the out-of-sample error of a hypothesis, one needs to estimate it. An estimator for $L$ may be obtained by the empirical error on sample $\mathcal{D}_{N}$, called in-sample error and defined as
\begin{linenomath}
	\begin{equation*}
	L_{\mathcal{D}_{N}}(h) \coloneqq \frac{1}{N} \sum_{i=1}^{N} \mathds{1}\left\{h(Z_{i}) \neq Y_{i}\right\}
	\end{equation*}
\end{linenomath}
for $h \in \mathcal{H}$. This is simply the classification error of $h$ when applied to classify the points in sample $\mathcal{D}_{N}$.

The main goal of Statistical Learning in this context is to approximate target hypotheses, which are minimizers of the out-of-sample error in $\mathcal{H}$. These hypotheses are in set
\begin{linenomath}
	\begin{equation*}
	h^{\star} \coloneqq \argminA\limits_{h \in \mathcal{H}} L(h)
	\end{equation*}
\end{linenomath}
and satisfy $L(h^{\star}) \leq L(h), \forall h \in \mathcal{H}$. We assume throughout this paper that $L(h^{\star}) = 0$, i.e., there is a hypothesis in $\mathcal{H}$ which predicts $Y$ from the values of $Z$ with probability one. Under the ERM principle, which proposes the minimization of the in-sample error as a method to approximate target hypotheses, we estimate them by 
\begin{linenomath}
	\begin{align*}
	\hat{h} \coloneqq \argminA\limits_{h \in \mathcal{H}} L_{\mathcal{D}_{N}}(h),
	\end{align*}	
\end{linenomath}
the hypotheses which minimize the classification error in sample $\mathcal{D}_{N}$. Observe that $\hat{h} = \hat{h}_{\mathcal{D}_{N}}$ is actually dependent on $\mathcal{D}_{N}$, but we drop the subscript to ease notation. 

In the ERM paradigm we are interested in considering Hypotheses Spaces $\mathcal{H}$ which are \textit{Probably Approximately Correct} (PAC)-Learnable \cite{valiant1984}. We say that Hypotheses Space $\mathcal{H}$ is PAC-Learnable\footnote{See \cite[Definition~3.3]{shalev2014} for a more general definition of (Agnostic) PAC-learnability.} if
\begin{linenomath}
	\begin{equation}
	\label{PAC_lim}
	\lim\limits_{N \rightarrow \infty} \mathbb{P}\Big(L(\hat{h}) > \epsilon\Big) = 0
	\end{equation} 
\end{linenomath}
for all $\epsilon > 0$. This means that, if the sample size $N$ tends to infinity, the out-of-sample error of the estimated hypotheses $\hat{h}$ is arbitrarily close to the out-of-sample error of the target hypotheses, i.e., zero, with high probability. In Probability Theory \cite{billingsley2008probability}, it means that $L(\hat{h})$ converges in probability to zero. In a PAC-Learnable Hypotheses Space we may estimate $h^{\star}$ arbitrarily well, with great confidence, if the sample size is \textit{great enough}.

In this paper, we will be interested in Hypotheses Spaces with a finite number of elements, i.e., $|\mathcal{H}| < \infty$. Such Hypotheses Spaces are not only PAC-Learnable, but we can establish a \textit{distribution-free} rate of convergence to zero of limit \eqref{PAC_lim}. Distribution-free in this context means that such a rate is true for any possible joint distribution $P$ of $(Z,Y)$. Theorem \ref{bound_error} presents such convergence rate. Its elementary proof, deduced for the first time by \cite{vapnik1974theory} and presented in \cite[Theorem~12.1]{devroye1996}, is in the Appendix.

\begin{theorem}
	\label{bound_error}
	Assume that $|\mathcal{H}| < \infty$ and $L(h^{\star}) = 0$. Then, for every $N$ and $\epsilon > 0$,
	\begin{equation}
	\label{bound1}
	\mathbb{P}\left(L(\hat{h}) > \epsilon\right) \leq |\mathcal{H}|e^{-N\epsilon}.
	\end{equation}	
\end{theorem}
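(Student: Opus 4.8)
The plan is to bound the probability that the ERM hypothesis $\hat h$ has out-of-sample error exceeding $\epsilon$ by passing through the event that some ``bad'' hypothesis in $\mathcal H$ is consistent with the sample. First I would observe the key implication: since $L(h^\star)=0$, there exists a hypothesis with $L_{\mathcal D_N}(h^\star)=0$ (indeed $L(h^\star)=0$ means $h^\star(Z)=Y$ almost surely, hence $h^\star$ classifies every sample point correctly with probability one), so the empirical minimizer $\hat h$ also satisfies $L_{\mathcal D_N}(\hat h)=0$. Consequently, if $L(\hat h)>\epsilon$, then in particular $\hat h$ belongs to the (random-free) collection
\begin{linenomath}
\begin{equation*}
\mathcal H_\epsilon \coloneqq \{h \in \mathcal H : L(h) > \epsilon\}
\end{equation*}
\end{linenomath}
and yet $L_{\mathcal D_N}(\hat h)=0$. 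Therefore
\begin{linenomath}
\begin{equation*}
\mathbb P\big(L(\hat h) > \epsilon\big) \leq \mathbb P\Big(\exists\, h \in \mathcal H_\epsilon : L_{\mathcal D_N}(h) = 0\Big).
\end{equation*}
\end{linenomath}

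Next I would apply a union bound over the finitely many hypotheses in $\mathcal H_\epsilon$ (a subset of the finite set $\mathcal H$), reducing the problem to controlling, for a single fixed $h$ with $L(h)>\epsilon$, the probability that $h$ makes no error on the i.i.d.\ sample. For such an $h$, the events $\{h(Z_i)=Y_i\}$ are independent with common probability $1-L(h) < 1-\epsilon$, so
\begin{linenomath}
\begin{equation*}
\mathbb P\big(L_{\mathcal D_N}(h) = 0\big) = \big(1 - L(h)\big)^{N} < (1-\epsilon)^{N}.
\end{equation*}
\end{linenomath}
Summing over $h \in \mathcal H_\epsilon$ and then enlarging to all of $\mathcal H$ gives $\mathbb P(L(\hat h)>\epsilon) < |\mathcal H_\epsilon|(1-\epsilon)^N \leq |\mathcal H|(1-\epsilon)^N$. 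Finally I would use the elementary inequality $1-\epsilon \leq e^{-\epsilon}$, valid for all real $\epsilon$, to conclude $(1-\epsilon)^N \leq e^{-N\epsilon}$ and hence the claimed bound $\mathbb P(L(\hat h)>\epsilon)\leq |\mathcal H|e^{-N\epsilon}$.

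I do not expect a genuine obstacle here; the argument is the textbook finite-hypothesis realizable-case bound. The only point requiring a little care is the first step — justifying that $L(\hat h)>\epsilon$ forces the existence of a consistent bad hypothesis — which hinges on the realizability assumption $L(h^\star)=0$ ensuring $\min_{h\in\mathcal H} L_{\mathcal D_N}(h)=0$ almost surely, so that the ERM minimizer is genuinely sample-consistent. One should also note the bound is vacuous (trivially true, being $\geq 1$) unless $\epsilon < 1$, and handle $\epsilon \geq 1$ separately or simply observe $L(\hat h)\leq 1$ always makes $\mathbb P(L(\hat h)>\epsilon)=0$ in that regime; either way the stated inequality holds for every $\epsilon>0$ and every $N$.
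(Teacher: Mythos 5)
Your proposal is correct and follows essentially the same route as the paper's proof: exploit realizability to get $L_{\mathcal{D}_N}(\hat h)=0$, union-bound over the hypotheses with $L(h)>\epsilon$ that are consistent with the sample, bound each such probability by $(1-\epsilon)^N$, and finish with $1-\epsilon\leq e^{-\epsilon}$. No gaps; your added remarks on the almost-sure consistency of the ERM and the vacuous regime $\epsilon\geq 1$ are just careful bookkeeping of the same argument.
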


We obtain as a corollary of Theorem \ref{bound_error} an upper bound for the sample size needed to estimate a hypotheses $\hat{h}$ such that $L(\hat{h}) < \epsilon$ with probability at least $1 - \delta$. This bound is obtained by solving $|\mathcal{H}| e^{-N\epsilon} = \delta$ for $N$, fixed $\epsilon$ and $\delta$, and is stated below.

\begin{corollary}
	\label{sample_size}
	Define $m_{\mathcal{H}}:(0,1)^{2} \rightarrow \mathbb{Z}_{+}$ as 
	\begin{linenomath}
		\begin{equation*}
		m_{\mathcal{H}}(\epsilon,\delta) =\frac{1}{\epsilon} \log \frac{|\mathcal{H}|}{\delta}.
		\end{equation*}
	\end{linenomath}
	For all $\epsilon,\delta \in (0,1)$, if $N \geq m_{\mathcal{H}}(\epsilon,\delta)$, then
	\begin{linenomath}
		\begin{equation}
		\label{cond_samplesize}
		\mathbb{P}\Big(L(\hat{h}) < \epsilon\Big) \geq 1 - \delta.
		\end{equation} 
	\end{linenomath}	
\end{corollary}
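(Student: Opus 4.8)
The plan is to obtain the corollary directly from Theorem~\ref{bound_error} by inverting the exponential tail bound in the sample size $N$. The first step is to record the structural fact that makes the inversion possible: for fixed $\epsilon > 0$ and fixed finite $|\mathcal{H}|$, the right-hand side of \eqref{bound1}, i.e. the map $N \mapsto |\mathcal{H}|\,e^{-N\epsilon}$, is strictly decreasing in $N$ and tends to $0$ as $N \to \infty$. Hence there is a unique real threshold $N^{\star}$ at which it equals the target level $\delta$, and $|\mathcal{H}|\,e^{-N\epsilon} \le \delta$ holds for every $N \ge N^{\star}$ (note that the hypotheses $|\mathcal{H}| < \infty$ and $L(h^{\star}) = 0$ needed to invoke Theorem~\ref{bound_error} are in force here).

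The second step is to compute $N^{\star}$. Solving $|\mathcal{H}|\,e^{-N\epsilon} = \delta$ gives $e^{-N\epsilon} = \delta/|\mathcal{H}|$, hence $-N\epsilon = \log(\delta/|\mathcal{H}|)$, that is
\[
N^{\star} = \frac{1}{\epsilon}\log\frac{|\mathcal{H}|}{\delta} = m_{\mathcal{H}}(\epsilon,\delta).
\]
Therefore, whenever $N \ge m_{\mathcal{H}}(\epsilon,\delta)$ we have $|\mathcal{H}|\,e^{-N\epsilon} \le \delta$, and feeding this into the bound \eqref{bound1} of Theorem~\ref{bound_error} yields $\mathbb{P}\big(L(\hat{h}) > \epsilon\big) \le \delta$, equivalently $\mathbb{P}\big(L(\hat{h}) \le \epsilon\big) \ge 1 - \delta$.

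The last step is to bridge the (cosmetic) gap between $\mathbb{P}(L(\hat{h}) \le \epsilon) \ge 1-\delta$ and the stated $\mathbb{P}(L(\hat{h}) < \epsilon) \ge 1-\delta$. This is done by applying Theorem~\ref{bound_error} with an arbitrary $\epsilon'' \in (0,\epsilon)$: since $\{L(\hat{h}) \ge \epsilon\} \subseteq \{L(\hat{h}) > \epsilon''\}$ we get $\mathbb{P}(L(\hat{h}) \ge \epsilon) \le |\mathcal{H}|\,e^{-N\epsilon''}$, and letting $\epsilon'' \uparrow \epsilon$ (continuity of probability from above, applied to the decreasing events $\{L(\hat h) > \epsilon - 1/n\}$) together with $N \ge m_{\mathcal{H}}(\epsilon,\delta)$ gives $\mathbb{P}(L(\hat{h}) \ge \epsilon) \le |\mathcal{H}|\,e^{-N\epsilon} \le \delta$, whence $\mathbb{P}(L(\hat{h}) < \epsilon) \ge 1-\delta$. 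Alternatively, since $m_{\mathcal{H}}$ is declared to take values in $\mathbb{Z}_{+}$, one may simply read $m_{\mathcal{H}}(\epsilon,\delta) = \big\lceil \tfrac{1}{\epsilon}\log\tfrac{|\mathcal{H}|}{\delta}\big\rceil$, for which $N \ge m_{\mathcal{H}}(\epsilon,\delta)$ still delivers $|\mathcal{H}|\,e^{-N\epsilon}\le\delta$, which is all that is used.

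I do not expect any genuine obstacle: the statement is essentially a one-line algebraic rearrangement of Theorem~\ref{bound_error}, and the only points demanding the slightest care are the strict-versus-non-strict inequality handled above and the observation that $\tfrac{1}{\epsilon}\log\tfrac{|\mathcal{H}|}{\delta}$ need not be a positive integer, so $m_{\mathcal{H}}$ should be understood with a ceiling (or, equivalently, the conclusion stated for all real $N$ above the threshold).
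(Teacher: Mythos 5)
Your proof is correct and follows essentially the same route as the paper, which obtains the corollary precisely by solving $|\mathcal{H}|e^{-N\epsilon}=\delta$ for $N$ in the bound of Theorem~\ref{bound_error}. The extra care you take with the strict inequality (via $\epsilon''\uparrow\epsilon$ and continuity of probability) and with the ceiling in $m_{\mathcal{H}}$ addresses details the paper passes over silently, but does not change the argument.
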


\begin{remark}
	The concept of PAC-Learnability is related to Vapnik-Chervonenkis (VC) Theory \cite{vapnik1998,vapnik2000,vapnik1974ordered,vapnik1971uniform,vapnik1974theory,vapnik1974ordered2}. Bounds analogous to \eqref{bound1} may be obtained from VC Theory for Hypotheses Spaces with infinite cardinality or such that $L(h^{\star}) > 0$. For more details see \cite{devroye1996,shalev2014,vapnik1998,vapnik2000}.
\end{remark}

\subsection{Parameter estimation}

Robust parameter estimation in dynamical systems may be achieved via Statistical Learning with the framework above. Let $\mathcal{Z} \subseteq \Theta$ be a finite subset of candidate parameters. It can be, for example, a grid of $\Theta \subseteq \mathbb{R}^{d}$. The joint distribution of $(Z,Y)$ is such that, for $\theta \in \mathcal{Z}$ and $y \in \{0,1\}$,
\begin{linenomath}
	\begin{equation*}
	\mathbb{P}\left(Z = \theta,Y = y\right) = q(\theta) \mathds{1}\left\{F(\theta,\{X_{\theta^{\star}}(t)\}_{t=1}^{T}) = y\right\}
	\end{equation*}
\end{linenomath}
in which $0 < q(\theta) < 1$ and $\sum_{\theta \in \mathcal{Z}} q(\theta) = 1$, i.e., $Z$ has a discrete probability distribution $q$ in $\mathcal{Z}$, and $Y$ is equal to $F(Z,\{X_{\theta^{\star}}(t)\}_{t=1}^{T})$ with probability one. Hence, $Z$ represents a parameter chosen randomly from $\mathcal{Z}$ according to probability distribution $q$ and $Y$ is equal to one if the evolution generated by this parameter well fit the observed evolution, and zero otherwise. Therefore, independently of the Hypotheses Spaces $\mathcal{H}$, the loss function is given by
\begin{linenomath}
	\begin{equation*}
	\ell(\theta,y;h) = \begin{cases}
	1, \text{ if } h(\theta) \neq F(\theta,\{X_{\theta^{\star}}\}_{t = 1}^{T})\\
	0, \text{ if } h(\theta) = F(\theta,\{X_{\theta^{\star}}\}_{t = 1}^{T})
	\end{cases}.
	\end{equation*}
\end{linenomath}

The loss function tells us what the hypothesis $h$ should do: point out what are the good (well fit) and what are the bad parameters in $\mathcal{Z}$. Let $p \in \mathbb{N}$ be the number of good parameters in $\mathcal{Z}$:
\begin{linenomath}
	\begin{equation*}
	p \coloneqq \left|\{\theta \in \mathcal{Z}: F(\theta,\{X_{\theta^{\star}}\}_{t = 1}^{T}) = 1\}\right|.
	\end{equation*}
\end{linenomath}
The Hypotheses Space considered in this scenario is
\begin{linenomath}
	\begin{equation*}
	\mathcal{H} \coloneqq \left\{h:\mathcal{Z} \to \{0,1\}: h(\theta) = 0 \text{ for all } \theta \text{ s.t. } F(\theta,\{X_{\theta^{\star}}\}_{t = 1}^{T}) = 0\right\}
	\end{equation*}
\end{linenomath}
so a hypothesis $h$ should be interpreted as the characteristic function of a subset of $\{\theta \in \mathcal{Z}: F(\theta,\{X_{\theta^{\star}}\}_{t = 1}^{T}) = 1\}$. Indeed, if $\theta$ is such that $h(\theta) = 1$, then it is a good parameter, hence $\{\theta: h(\theta) = 1\}$ is a subset of the good parameters, for all $h \in \mathcal{H}$. In this scenario, $h^{\star}$ is the subset of the $p$ good parameters in $\mathcal{Z}$ and $L(h^{\star}) = 0$. Observe that
\begin{linenomath}
	\begin{equation*}
	|\mathcal{H}| = 2^{p},
	\end{equation*}
\end{linenomath}
so $\mathcal{H}$ has finite cardinality. 

The number $p$ of good parameters in $\mathcal{Z}$ is unknown, but strictly related to the choice of fitness map $F$. On the one hand, if the concept of \textit{well fit} is too restrict, then less parameters will be good, hence one has a \textit{small} $p$. In this case, one may not be able to obtain robust evidences about $X_{\theta^{\star}}(t)$ due to the small number of candidate evolutions. On the other hand, if the concept of \textit{well fit} is too loose, then more parameters may meet the criteria of a good one, so greater is $p$. As we want a robust estimation, $p$ should be much greater than one, so there are a handful of evolutions available from which to obtain evidences about the evolution of $X_{\theta^{\star}}(t)$. Nevertheless, if $p$ is too great, then probably our definition of well fit is too loose, so the candidate systems may not be homogeneously related to $X_{\theta^{\star}}$, rendering useless evidences about $X_{\theta^{\star}}(t)$. Hence, both $\mathcal{Z}$ and $F$ should be chosen based on all prior information available about the problem at hand, so $p \gg 1$, but the good parameters actually generate meaningful trajectories.

After one fixes $\mathcal{Z}$ and $F$, the ERM principle is then applied to estimate a set of at most $p$ good parameters. The estimated sets are
\begin{linenomath}
	\begin{equation*}
	\hat{h} = \Big\{h \in \mathcal{H}: h(Z_{i}) = F(Z_{i},\{X_{\theta^{\star}}\}_{t = 1}^{T}), \ \forall i \in \{1,\dots,N\}\Big\}
	\end{equation*}
\end{linenomath}
i.e., the subsets of good parameters which contain all of the observed in the sample, which generates the following set of good parameters:
\begin{linenomath}
	\begin{equation*}
	\hat{\Theta} = \Big\{Z_{i}: F(Z_{i},\{X_{\theta^{\star}}\}_{t=1}^{T}) = 1\Big\}.
	\end{equation*}
\end{linenomath}
Therefore, the set \textit{de facto} estimated by the ERM principle is $\hat{h}$ such that
\begin{linenomath}
	\begin{align*}
	\hat{h}(\theta) = \begin{cases}
	1, & \text{ if } \theta = Z_{i} \text{ s.t. } F(Z_{i},\{X_{\theta^{\star}}\}_{t = 1}^{T}) = 1 \\ 
	0, & \text{ if } \theta \in \mathcal{Z}\setminus\{Z_{i}: F(Z_{i},\{X_{\theta^{\star}}\}_{t = 1}^{T}) = 1\}
	\end{cases}
	\end{align*}
\end{linenomath}
that is, the set containing only the good parameters in the sample. From now on when we denote $\hat{h}$ we mean the set above.

Observe that
\begin{linenomath}
	\begin{align*}
	L(\hat{h}) &= \sum_{\substack{\theta \in \mathcal{Z}\\ F(\theta,\{X_{\theta^{\star}}\}_{t = 1}^{T}) = 1}} (1 - \hat{h}(\theta)) \ q(\theta) = \sum_{\substack{\theta \in \mathcal{Z}\\
			F(\theta,\{X_{\theta^{\star}}\}_{t = 1}^{T}) = 1\\
			\theta \notin \{Z_{1},\dots,Z_{N}\}}} \ \ q(\theta),
	\end{align*}
\end{linenomath}
is the measure of the good parameters not present in the sample. Since 
\begin{linenomath}
	\begin{equation*}
	G(\{X_{\theta^{\star}}\}_{t = 1}^{T}) \coloneqq \mathbb{P}(\theta: F(\theta,\{X_{\theta^{\star}}\}_{t = 1}^{T}) = 1),	
	\end{equation*}
\end{linenomath}
an upper bound for $L(\hat{h})$, may be very small, simply having $L(\hat{h}) < \epsilon$ is not meaningful if such probability is lesser than $\epsilon$. Hence, in order to assess if $L(\hat{h})$ is \textit{small}, one should compare it with $G(\{X_{\theta^{\star}}\}_{t = 1}^{T})$, to establish if it is \textit{small} relatively to $G(\{X_{\theta^{\star}}\}_{t = 1}^{T})$.

In order to perform such a comparison, we apply Theorem \ref{bound_error}, which states that the probability of having an out-of-sample error greater than $cG(\{X_{\theta^{\star}}\}_{t = 1}^{T})$, which means that a proportion at least $0 < c < 1$ of the measure of the set of all good parameters is not observed in the sample, is such that
\begin{linenomath}
	\begin{equation}
	\label{phat}
	\mathbb{P}\left(L(\hat{h}) > cG(\{X_{\theta^{\star}}\}_{t = 1}^{T})\right) \leq 2^{p} e^{-cG(\{X_{\theta^{\star}}\}_{t = 1}^{T})N}
	\end{equation}
\end{linenomath}
so it follows from Corollary \ref{sample_size} that, in order to obtain a proportion\footnote{In the sense of measure, not necessarily a proportion of the number of good parameters.} at least $(1-c)$ of the good parameters in the sample, with a probability greater than $1-\delta$, one needs a sample of size at least
\begin{linenomath}
	\begin{align}
	\label{psample_size}
	m_{\mathcal{H}}\left(c,\delta\right) &= \frac{1}{cG(\{X_{\theta^{\star}}\}_{t = 1}^{T})} \left[p \log 2 - \log \delta\right].
	\end{align}
\end{linenomath}
Assuming that $p \sim G(\{X_{\theta^{\star}}\}_{t = 1}^{T}) |\mathcal{Z}|$, the sample size above is meaningful only if
\begin{linenomath}
	\begin{align*}
	c \geq \log 2 - \frac{\log \delta}{p} > 0.69,
	\end{align*}
\end{linenomath}
for otherwise $m_{\mathcal{H}}\left(c,\delta\right) > |\mathcal{Z}|$. This results is reasonable since, as $N \to |\mathcal{Z}|$, the cardinality of $\{Z_{1},\dots,Z_{N}\}$ does not converge to $|\mathcal{Z}|$ with probability one for there will be points which are sampled more than once. Hence, to observe  a proportion of the good parameters one needs to sample more models than there are in $\mathcal{Z}$, so this method is better than an exhaustive search of $\mathcal{Z}$ only if one wants a small number of good parameters when compared to the number of good parameters in $\mathcal{Z}$.

The quantities on the right-hand side of \eqref{phat} and \eqref{psample_size} are not known, since $p$ and $G(\{X_{\theta^{\star}}\}_{t = 1}^{T})$ are unknown. However, the sample size above is an increasing function of $p$, therefore, given an upper bound to $p$, one can obtain a greater sample size that guarantees \eqref{cond_samplesize}. Also, $G(\{X_{\theta^{\star}}\}_{t = 1}^{T})$ may be estimated by a pre-sample as the proportion of good parameters in it.

If $q(\theta) = G(\{X_{\theta^{\star}}\}_{t = 1}^{T})/p$ for all $\theta$ such that $F(\theta,\{X_{\theta^{\star}}\}_{t = 1}^{T}) = 1$ then the bounds in \eqref{phat} and the sample size \eqref{psample_size} may be improved. The bound below is much better than \eqref{psample_size}, especially for $c$ close to $1$.

\begin{proposition}
	\label{prop_improve}
	If the measure $q$ conditioned on $F(\theta,\{X_{\theta^{\star}}\}_{t = 1}^{T}) = 1$ is uniform then
	\begin{linenomath}
		\begin{equation*}
		\mathbb{P}\left(L(\hat{h}) > cG(\{X_{\theta^{\star}}\}_{t = 1}^{T})\right) \leq \sum_{k=0}^{(1-c)p} \binom{p}{k} e^{-cG(\{X_{\theta^{\star}}\}_{t = 1}^{T})N}
		\end{equation*}
	\end{linenomath}
	from which follows
	\begin{linenomath}
		\begin{equation}
		\label{bound_m}
		m_{\mathcal{H}}\left(c,\delta\right) \leq \frac{1}{cG(\{X_{\theta^{\star}}\}_{t = 1}^{T})} \left[\log \binom{p}{(1-c)p} + \log\left(1 + \frac{(1-c)^{2} p^{2}}{cp + 1}\right) - \log \delta \right]
		\end{equation}
	\end{linenomath}
	if $c > 1/2$.
\end{proposition}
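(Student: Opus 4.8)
The plan is to exploit the uniformity assumption so that the bad event --- more than $cG$ of the conditional measure missing from the sample --- becomes a purely combinatorial statement about which of the $p$ good parameters fail to appear. Write $G := G(\{X_{\theta^{\star}}\}_{t=1}^{T})$ for brevity. Under the hypothesis each good parameter carries conditional mass $G/p$, so if $k$ good parameters are absent from $\{Z_1,\dots,Z_N\}$ then $L(\hat h) = kG/p$. Hence $L(\hat h) > cG$ is exactly the event that at least $(1-c)p + 1$ good parameters are missing, equivalently that the number present is at most $cp$. Rather than re-running the union bound over all $2^p$ hypotheses, I would union bound only over the ``bad'' hypotheses, i.e.\ over subsets $S$ of the good parameters with $|S| \le cp$ (these are the $h$ with $L(h) = (p-|S|)G/p > cG$): for each such $S$, the probability that the sample lands entirely inside the event ``none of the good parameters outside $S$ is sampled'' is, since those missing parameters carry total mass at least $(1-c)G$, at most $(1 - (1-c)G)^N \le e^{-(1-c)GN}$. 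Wait --- one must be slightly careful: the cleanest route mirrors the proof of Theorem \ref{bound_error}, bounding $\mathbb{P}(L(\hat h) > \epsilon)$ by summing over hypotheses $h$ with $L(h) > \epsilon$ the quantity $(1-L(h))^N$. With $\epsilon = cG$ the relevant $h$ are indexed by the omitted-set size $k$ ranging over $(1-c)p+1, \dots, p$, there are $\binom{p}{k}$ of them, and each contributes $(1 - kG/p)^N \le e^{-cGN}$ because $k/p > c$. Re-indexing by $j = p-k$ (the number of good parameters the hypothesis keeps) gives $\sum_{j=0}^{\lfloor cp\rfloor}\binom{p}{j} e^{-cGN}$; but the Proposition writes it with summation variable $k$ from $0$ to $(1-c)p$ against $\binom{p}{k}$, which is the same sum by the symmetry $\binom{p}{k}=\binom{p}{p-k}$. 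So the first displayed inequality of Proposition \ref{prop_improve} follows directly from the argument in the Appendix proof of Theorem \ref{bound_error}, just with a sharper count of the offending hypotheses.

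For the sample-size bound \eqref{bound_m}, I would set the right-hand side of the just-proved inequality equal to $\delta$ and solve for $N$. That is, I need $N$ such that
\begin{linenomath}
\begin{equation*}
\sum_{k=0}^{(1-c)p} \binom{p}{k} e^{-cGN} \le \delta,
\end{equation*}
\end{linenomath}
i.e.\ $N \ge \frac{1}{cG}\left[\log\!\big(\sum_{k=0}^{(1-c)p}\binom{p}{k}\big) - \log\delta\right]$. The remaining task is purely to bound the partial binomial sum $\sum_{k=0}^{(1-c)p}\binom{p}{k}$ from above by $\binom{p}{(1-c)p}\big(1 + \tfrac{(1-c)^2 p^2}{cp+1}\big)$. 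The natural device is to compare consecutive terms: for $k \le (1-c)p$ the ratio $\binom{p}{k-1}/\binom{p}{k} = k/(p-k+1)$ is maximized at the top index $k=(1-c)p$, where it equals $(1-c)p / (cp + 1) =: r$. Since $c > 1/2$ we have $(1-c)p < cp < cp+1$, so $r < 1$ and the sum is dominated by a geometric series: $\sum_{k=0}^{(1-c)p}\binom{p}{k} \le \binom{p}{(1-c)p}\sum_{i\ge 0} r^{i} = \binom{p}{(1-c)p}\cdot \frac{1}{1-r} = \binom{p}{(1-c)p}\cdot\frac{cp+1}{cp+1-(1-c)p}$. Then a further crude estimate $\frac{1}{1-r} \le 1 + \frac{r}{1-r}$ with $r \le (1-c)p/(cp+1)$ and $1-r \ge$ (something controllable for $c>1/2$) should produce the stated form $1 + \frac{(1-c)^2p^2}{cp+1}$; taking logs and using $\log(\text{product}) = \log\binom{p}{(1-c)p} + \log(1 + \cdots)$ yields \eqref{bound_m}.

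The main obstacle I anticipate is the last algebraic step: matching the clean geometric-series bound $\frac{cp+1}{cp+1-(1-c)p}$ to the exact expression $1 + \frac{(1-c)^2 p^2}{cp+1}$ printed in \eqref{bound_m}. These are not literally equal, so one must either find the intended chain of inequalities (most likely $\frac{1}{1-r} \le 1 + \frac{r}{1-r}$ followed by bounding $1-r \ge \frac{cp+1}{(1-c)^2p^2}\cdot(\text{its own numerator})$ --- or more plausibly a slightly lossy bound $1-r \ge 1 - \frac{(1-c)p}{cp+1}$ combined with $\frac{r}{1-r}\le \frac{(1-c)p/(cp+1)}{\text{lower bound on }1-r}$) or accept a marginally weaker constant. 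I would also need to keep the condition $c > 1/2$ in play precisely where it guarantees $r < 1$ (so the geometric series converges) and where it keeps $cp + 1 - (1-c)p > 0$; I expect $c>1/2$ is used for exactly this positivity, and possibly the rounding of $(1-c)p$ to an integer should be commented on (assume $(1-c)p \in \mathbb{Z}$, or replace by $\lfloor (1-c)p\rfloor$ throughout). Everything else --- the reduction to the combinatorial event, the union bound over bad hypotheses, and inverting $\binom{\cdot}{\cdot}e^{-cGN}=\delta$ --- is routine given Theorem \ref{bound_error} and its Appendix proof.
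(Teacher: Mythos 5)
Your overall strategy is the same as the paper's (sharpen the union bound of Theorem \ref{bound_error} by counting, via uniformity, only the hypotheses with large risk, then invert for $N$ and control the partial binomial sum), but two steps as written do not go through. Write $G \coloneqq G(\{X_{\theta^{\star}}\}_{t = 1}^{T})$. First, you misidentify the bad event: if $k$ good parameters are absent from the sample then $L(\hat{h}) = kG/p$, so $L(\hat{h}) > cG$ means $k > cp$, i.e.\ the hypothesis retains at most $(1-c)p$ good parameters --- not, as you state, ``at least $(1-c)p+1$ missing, at most $cp$ present'' (you have swapped $c$ and $1-c$; for $c>1/2$ these are different events). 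This propagates: on your range $k \geq (1-c)p+1$ the claim $k/p > c$, which you use to get the factor $e^{-cGN}$, is false for $(1-c)p < k \leq cp$, and your final appeal to the symmetry $\binom{p}{k}=\binom{p}{p-k}$ to turn $\sum_{j=0}^{\lfloor cp\rfloor}\binom{p}{j}$ into $\sum_{k=0}^{(1-c)p}\binom{p}{k}$ is also false: symmetry gives the upper tail $\sum_{k \geq (1-c)p}\binom{p}{k}$, which is strictly larger when $c>1/2$. With the correct identification no symmetry is needed: the offending hypotheses are exactly the subsets keeping at most $(1-c)p$ good parameters, there are at most $\sum_{k=0}^{(1-c)p}\binom{p}{k}$ of them, each satisfies $\mathbb{P}(L_{\mathcal{D}_{N}}(h)=0) \leq (1-L(h))^{N} \leq e^{-cGN}$, and the first display follows from the intermediate bound $\mathbb{P}(L(\hat{h})>\epsilon) \leq |\{h \in \mathcal{H}: L(h)>\epsilon\}|\,e^{-\epsilon N}$ extracted from the proof of Theorem \ref{bound_error}, which is exactly what the paper does.

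Second, the passage to \eqref{bound_m} is left open, and the geometric-series route you sketch cannot produce the stated constant: with $r = \frac{(1-c)p}{cp+1}$, the infinite-series bound $\binom{p}{(1-c)p}\frac{1}{1-r}$ is not $\leq \binom{p}{(1-c)p}\bigl(1+\frac{(1-c)^{2}p^{2}}{cp+1}\bigr)$ in general --- e.g.\ for $(1-c)p = 1$ the right-hand side equals $\binom{p}{1}(1+r)$, which is the exact value of the sum, while $\frac{1}{1-r} > 1+r$ --- so no chain of further inequalities can bridge them. The intended (and simpler) argument is the paper's Proposition \ref{bound_log}: for every $k \leq (1-c)p - 1$ one has $\binom{p}{k} \leq \binom{p}{(1-c)p-1} = r\,\binom{p}{(1-c)p}$ by monotonicity of binomial coefficients up to $p/2$ (this is where $c>1/2$ enters, guaranteeing $(1-c)p \leq p/2$, not merely $r<1$), so the $(1-c)p$ lower terms contribute at most $(1-c)p\cdot r = \frac{(1-c)^{2}p^{2}}{cp+1}$ times the top term; equivalently, keep your term-ratio bound but use the finite estimate $\sum_{i=0}^{(1-c)p} r^{i} \leq 1+(1-c)p\,r$ instead of $\frac{1}{1-r}$. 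Combining this with solving $\sum_{k=0}^{(1-c)p}\binom{p}{k}e^{-cGN}=\delta$ for $N$ (your inversion step, which is correct and is Corollary \ref{sample_size}) yields \eqref{bound_m}. Your remark about integrality of $(1-c)p$ is fair; the paper implicitly assumes it as well.
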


Figure \ref{m} presents $m_{\mathcal{H}}\left(c,\delta\right)$ given by \eqref{psample_size} (dashed) and its upper bound when $q$ conditioned on the good parameters is uniform (solid line), as a function of $c$, with $\delta = 0.1$, $G(\{X_{\theta^{\star}}\}_{t = 1}^{T}) = 0.001,$ $|\mathcal{Z}| = 10^{6}$ and $p = 1,000$. We see that, for example, to obtain $10\%$ of the good parameters one has to sample a proportion around $0.35$ and $0.8$ when $q$ conditioned is uniform and in general, respectively. Therefore, taking $q$ close to the uniform distribution when conditioned on the good parameters is better since less samples are needed to find a given proportion of the good parameters.

\begin{figure}[ht]
	\centering
	\includegraphics[width=\linewidth]{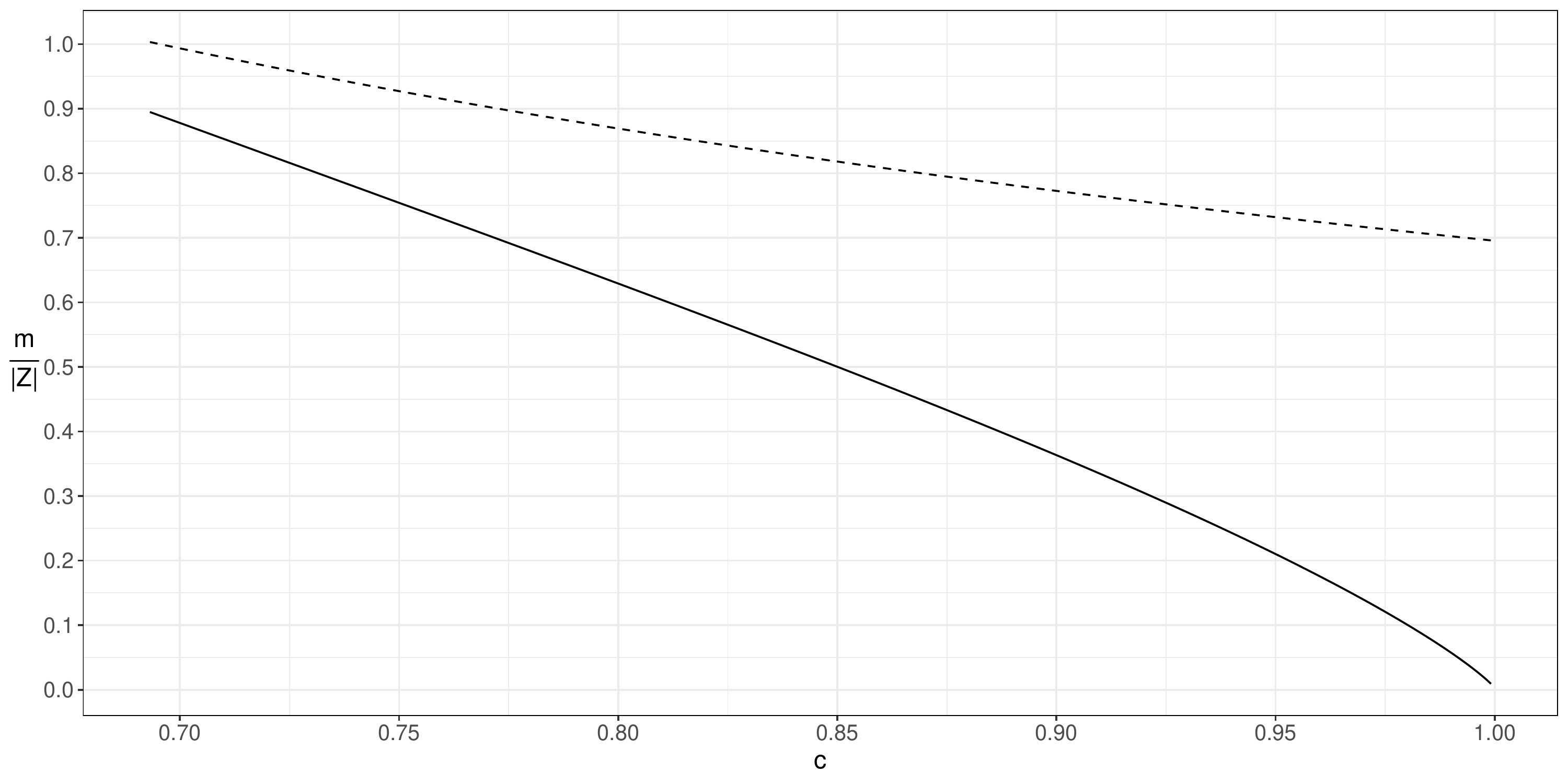}
	\caption{Sample size $m_{\mathcal{H}}\left(c,\delta\right)$ given by \eqref{psample_size} (dashed) and its upper bound \eqref{bound_m} when $q$ conditioned on the good parameters is uniform (solid line), as a function of $c$, with $\delta = 0.1$, $G(\{X_{\theta^{\star}}\}_{t = 1}^{T}) = 0.001,$ $|\mathcal{Z}| = 10^{6}$ and $p = 1,000$.} \label{m}
\end{figure}

Since the sample size above is distribution-free, i.e., true for any distribution, it will be an overestimated number in most cases. Even the sample size when $q$ conditioned on the good parameters is uniform tends to be overestimated. For example, if $q$ is the uniform distribution on $\mathcal{Z}$, in order to obtain a proportion $1-c$ of the good parameters one should have a sample of the order $(1-c)|\mathcal{Z}|$ which is lesser than the one calculated above. However, when $q$ is far from the uniform distribution, the researcher does not have control over $q$, i.e., he cannot choose it, or it is unknown, the sample size above may be a tool to assess beforehand the computational complexity of the estimation process, rather than a number that must be achieved. 

\subsection{Relation to ABC}
\label{relABC}

The method developed above is analogous to ABC rejection sampler when the prior distribution $\pi(\theta)$ has a discrete domain and $f(x|\theta)$ is deterministic. Observe that the rejection test is a special case of a fitness map and the good parameters $\hat{\Theta}$ is a sample of the approximation $\pi(\theta|F(\theta,\{X_{\theta^{\star}}\}_{t = 1}^{T}) = 1)$ of the posterior distribution. Hence, the method is a special case of ABC: a reinterpretation of it through the lens of Statistical Learning.

For practical purposes of obtaining qualitative properties of the system evolution, as will be exemplified below, more optimized ABC methods, such as ABC MCMC or ABC SMC, may be applied to calculate $\hat{\Theta}$ since they return a sample of \textit{good parameters} analogous to the rejection sampler. Furthermore, one may consider a distribution $\pi(\theta)$ with infinite domain to carry out the qualitative analysis, even though the method and sample size developed above does not cover this case. Indeed, in the qualitative analysis we will interpret the behaviour of the estimated parameters of a disease spread over time so, at principle, they could be estimated by any ABC method. However, in next section, we apply the method as described above to show how it works in practice.

\section{Examples}
\label{Sec4}

\subsection{SIR}

The spread of infectious diseases within a population is generally studied via compartmental epidemiological models \cite{capasso2008,vynnycky2010}, in which an individual is in either one of a handful of states related to the disease (Susceptible, Exposed, Infected, Recovered, etc...), and changes states according to some rates. In these models, the rates, which have epidemiological meaning, are the parameters, which should be estimated for each disease and different populations. 

We consider the family of Susceptible-Infected-Recovered (SIR) dynamical systems generated by the following difference equations
\begin{linenomath}
	\begin{equation}
	\label{SIR}
	\begin{cases}
	S(t+1) - S(t) = - \beta \ I(t)S(t)\\
	I(t+1) - I(t) = \beta \ I(t)S(t) - \gamma \ I(t)\\
	R(t+1) - R(t) = \gamma \ I(t)
	\end{cases}
	\end{equation}
\end{linenomath}
with $(S(0),I(0),R(0)) = c(0.95,0.05,0)$. In this case, $\Omega = \{x \in [0,1]^{3}: x_{1} + x_{2} + x_{3} = 1\}$, $\theta = (\beta,\gamma)$ and $\Theta = \mathbb{R}_{+}^{2}$. The observed evolution will be that generated by $\theta^{\star} = (0.25,1/21)$, for $t \leq 10$, and the fitness map will be
\begin{linenomath}
	\begin{equation*}
	F(\theta,\{X_{\theta^{\star}}(t)\}_{t=1}^{10}) = \min\limits_{t \in \{1,\dots,T\}} \min\limits_{i \in \{1,2,3\}}  \mathds{1}\Big\{|(X_{\theta}(t))_{i} - (X_{\theta^{\star}}(t))_{i}| \leq r \ (X_{\theta^{\star}}(t))_{i} \Big\}\\
	\end{equation*}
\end{linenomath}
so a parameter $\theta$ is good if the evolution $\{X_{\theta}(t)\}_{t=1}^{10}$ is within $100r\%$ of $\{X_{\theta}(t)\}_{t=1}^{10}$ for all three compartments in all times $t \leq T$, for $0 < r < 1$.

In the simulations, we consider as candidate models a grid of squares with side $0.001$ of the following three subsets of $\Theta$
\begin{linenomath}
	\begin{align*}
	\mathcal{Z}_{1} = (0,1] \times (0,0.5] & & \mathcal{Z}_{1} = (0,1] \times (0,0.2] & & \mathcal{Z}_{3} = [0.1,0.5] \times (0,0.2]
	\end{align*}
\end{linenomath}
which represent distinct prior knowledge about parameters $\beta$ and $\gamma$. If one does not known much about the parameters, he may assume that $\beta$ is lesser than $1$ and $\gamma$ lesser than $0.5$, considering $\mathcal{Z}_{1}$. If one has prior information about $\gamma$, that it is no more than $0.2$, then he may consider $\mathcal{Z}_{2}$. Finally, if one has also prior knowledge about $\beta$, that it is between $0.1$ and $0.5$, then he may consider $\mathcal{Z}_{3}$. The cardinality of these sets are, respectively, $5 \times 10^{5}$, $2 \times 10^{5}$ and $80,200$. We take $q(\theta)$ as the uniform measure on the respective set $\mathcal{Z}_{i}$.

Two values of $r$, namely, $0.05$ and $0.1$, are considered, and $c = 0.9$ is chosen to obtain $10\%$ of the good models with a confidence of $99\%$ ($\delta = 0.01$). To illustrate the method, an exhaustive search of all sets was performed in order to calculate $G(\{X_{\theta^{\star}}\}_{t = 1}^{T})$ and $p$, even though it will not be performed when applying the method on practice, and the results are on Table \ref{table_sirsize}. The value of $p$ is the same in all sets, although $G(\{X_{\theta^{\star}}\}_{t = 1}^{T})$ is greater in the smallest set $\mathcal{Z}_{3}$, that is the one which incorporated more prior information about the problem. Also, an upper bound for the sample size needed to obtain a same proportion of these good parameters, 10\% in this case, is much lesser in the smallest set, illustrating that as more prior information about the problem at hand is incorporated in the estimation process, less samples are needed. Observe that the fitness map with $r = 0.1$ is a more loose definition of goodness-of-fit, hence there are almost four times the number of good models when $r = 0.05$.

\begin{table}[ht]
	\centering
	\caption{Values of $G(\{X_{\theta^{\star}}\}_{t = 1}^{T})$, $p$ and $m(0.9,0.01)$ for each set and fitness map.} \label{table_sirsize}
	\begin{tabular}{ccccc}
		\hline
		$r$ & $\mathcal{Z}$ & $G(\{X_{\theta^{\star}}\}_{t = 1}^{T})$ & $p$ & $m(0.9,0.01)$ \\ 
		\hline
		\multirow{3}{*}{0.05} & $\mathcal{Z}_{1}$ & 0.000136 & 68 & 211,219 \\ 
		& $\mathcal{Z}_{2}$ & 0.00034 & 68 & 84,487 \\ 
		& $\mathcal{Z}_{3}$ & 0.00084 & 68 & 33,879 \\ 
		\hline
		\multirow{3}{*}{0.10} & $\mathcal{Z}_{1}$ & 0.000526 & 263 & 186,534 \\ 
		& $\mathcal{Z}_{2}$ & 0.001315 & 263 & 74,613 \\ 
		& $\mathcal{Z}_{3}$ & 0.003279 & 263 & 29,920 \\ 
		\hline
	\end{tabular}
\end{table}

Figure \ref{SIRfig} shows the evolution of the good models obtained in each scenario, and Table \ref{SIRtable} presents descriptive statistics of the good parameters and the peak of infected, i.e., the day with more simultaneously infected individuals, simulated by the good parameters. The evolution generated by the good parameters is close to the evolution of the SIR for all times, not only the first ten which were used in the estimation. Also, Table \ref{SIRtable} shows that the values of $\beta$ and $\gamma$ of the good parameters are close to the real ones. Finally, the peak of infected individuals, which for the observed model occurs at day $24$, was predicted within an error of no more than two days by the evolution generated by the good parameters in all scenarios. We conclude that, by applying the method developed in this paper, we may obtain information about the parameters and qualitative behaviour of the SIR model even when we do not have much prior information about the parameters ($\mathcal{Z}_{1}$) or when we choose a more loose fitness map ($r = 0.1$).

\begin{table}[ht]
	\centering
	\caption{Sample size (N), number of good parameters in the sample ($|\hat{\Theta}|$) and descriptive statistics of $\beta$, $\gamma$ and the peak of the disease for the evolution generated by the good parameters in each scenario. The true values of $\beta$, $\gamma$ and the peak are $0.25, 0.0476$ and $24$, respectively.} \label{SIRtable}
	\resizebox{\linewidth}{!}{
		\begin{tabular}{|cccc|cccc|cccc|cccc|}
			\hline
			\multirow{2}{*}{$\mathcal{Z}$} & \multirow{2}{*}{r} & \multirow{2}{*}{N} & \multirow{2}{*}{$|\hat{\Theta}|$} & \multicolumn{4}{c|}{$\beta$} & \multicolumn{4}{c|}{$\gamma$} & \multicolumn{4}{c|}{Peak} \\ \cline{5-16}
			& & & & Mean & Median & Min & Max & Mean & Median & Min & Max & Mean & Median & Min & Max \\ 
			\hline
			\multirow{2}{*}{$Z_{1}$} & 0.05 & 211,219 &   25 & 0.249 & 0.248 & 0.242 & 0.256 & 0.047 & 0.047 & 0.046 & 0.049 & 24 & 24 &   23 &   25 \\ 
			& 0.1 & 186,534 &   86 & 0.249 & 0.248 & 0.232 & 0.266 & 0.048 & 0.048 & 0.043 & 0.052 & 23.988 & 24 &   23 &   26 \\ 
			\hline
			\multirow{2}{*}{$Z_{2}$} & 0.05 & 84,488 &   30 & 0.250 & 0.249 & 0.241 & 0.258 & 0.047 & 0.047 & 0.046 & 0.049 & 23.933 & 24 &   23 &   25 \\ 
			& 0.1 & 74,614 &   85 & 0.251 & 0.251 & 0.232 & 0.268 & 0.047 & 0.047 & 0.043 & 0.052 & 23.918 & 24 &   22 &   26 \\ 
			\hline
			\multirow{2}{*}{$Z_{3}$} & 0.05 & 33,880 &   25 & 0.250 & 0.249 & 0.242 & 0.257 & 0.048 & 0.048 & 0.046 & 0.049 & 23.880 & 24 &   23 &   25 \\ 
			& 0.1 & 29,920 &   90 & 0.250 & 0.251 & 0.232 & 0.268 & 0.047 & 0.048 & 0.043 & 0.052 & 23.978 & 24 &   22 &   26 \\    
			\hline
	\end{tabular}}
\end{table}

\begin{figure}[ht]
	\centering
	\includegraphics[width = \linewidth]{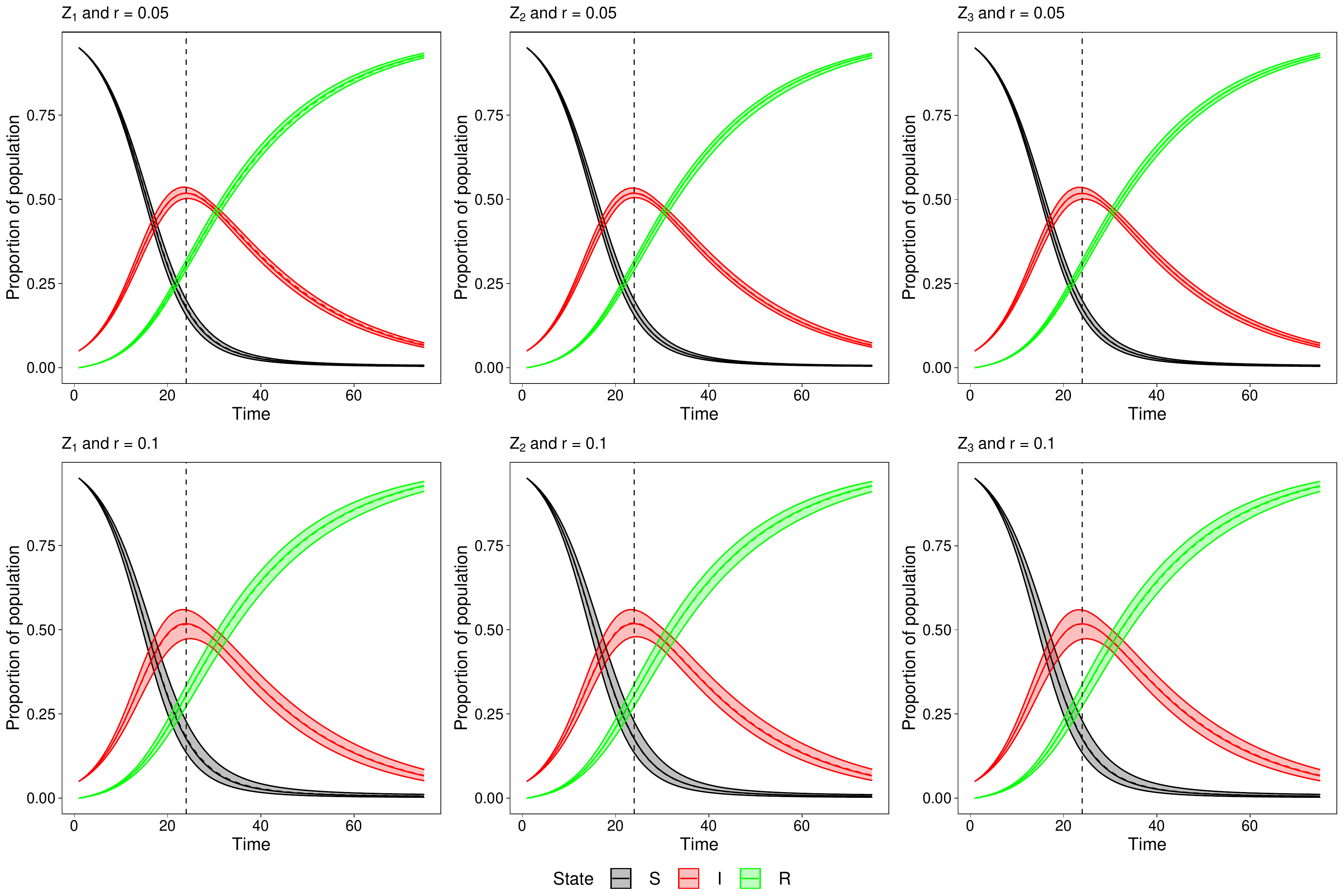}
	\caption{The SIR evolution (solid line), and the mean evolution generated by the good parameters (dashed line) in each scenario. The ribbon refers to the area between the minimum and the maximum evolution generated by the good parameters and the dashed vertical line is the peak of infected.} \label{SIRfig}
\end{figure}

\subsection{SEIR for COVID-19}

In order to mitigate the effects of COVID-19 pandemic, the greatest world health crisis in a century, governments and the population have sought reliable information about its spread. More than quantitative predictions about cases and deaths tolls, the main questions regarding the disease spread are about certain milestones such as its peak, when it will be over, if the measures implemented to contain it are effective and if or when the available intensive care units will be full. In this section we illustrate, with data about its spread in the US, how the method developed in this paper may be useful in answering such questions about COVID-19.

In order to account for features of COVID-19, we add a compartment to the usual SEIR model (see \cite{hethcote2000} for a review of compartmental models). After being exposed to the disease, an individual becomes Infected with rate $\gamma_{I}$ and is able to infect others. However, once infected, an individual either recovers with rate $\nu_{R}$ or is accounted by the official statistics becoming Infected in Statistics with rate $\gamma_{S}$, not being able to infect others any more. 

Hence, we are assuming that a parcel of the infected is never accounted by official statistics and that, once diagnosed with COVID-19, an individual will not infect others, as he will be either admitted to a hospital or be in quarantine. These assumptions account for features of the disease such as the under-notification of cases. We also add a compartment for Deaths by the disease and assume that natural deaths and births balance out, so they are not considered to simplify the model, which is illustrated in the diagram of Figure \ref{diagram}.

\begin{figure}[H]
	\centering
	\begin{tikzpicture}[scale=0.5, transform shape]
	\tikzstyle{hs} = [draw, rounded corners, minimum height=4em, minimum width=5em, node distance=1.5cm, line width=1pt, scale=1.5]

	\node at (-6,0) [hs] (S) {Susceptible};
	\node at (0,0) [hs] (E) {Exposed};
	\node at (6,-5) [hs] (I^{S}) {Infected in Statistics};
	\node at (6,0) [hs] (I) {Infected};
	\node at (12.5,0) [hs] (R) {Recovered};
	\node at (12.5,-5) [hs] (D) {Death};
	
	\draw[->] (S) -- (E) node[pos=0.5,sloped,above,scale=1.5] {$\beta$};
	\draw[->] (E) -- (I) node[pos=0.5,sloped,above,scale=1.5] {$\gamma_{I}$};
	\draw[->] (I) -- (I^{S}) node[pos=0.5,sloped,right,scale=1.5,rotate = 90] {$\gamma_{S}$};
	\draw[->] (I) -- (R) node[pos=0.5,sloped,above,scale=1.5] {$\nu_{R}$};
	\draw[->] (I^{S}) -- (R) node[pos=0.5,sloped,below,scale=1.5] {$\nu_{RS}$};
	\draw[->] (I^{S}) -- (D) node[pos=0.5,sloped,below,scale=1.5] {$\delta$};
	\end{tikzpicture}
	\caption{Diagram of the SEIR model for COVID-19.}
	\label{diagram}
\end{figure}
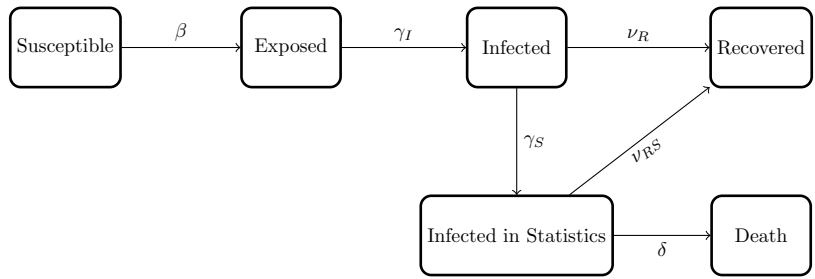

The evolution of the Susceptible ($S$), Exposed ($E$), Infected ($I$), Infected in Statistics ($I_{s}$), Recovered ($R$) and Deaths ($D$) by COVID-19 is ruled by the difference equations
\begin{linenomath}
	\begin{align*}
	S(t+1) - S(t) &= - \frac{\beta S(t)}{(N - D(t))} I(t) \\
	E(t+1) - E(t) &= - \gamma_{I} E(t) + \frac{\beta S(t)}{(N - D(t))} I(t) \\
	I(t+1) - I(t) &=- \left(\nu_{R} + \gamma_{S}\right) I(t) + \gamma_{I} E(t) \\
	I_{s}(t+1)  - I_{s}(t) &= - (\nu_{RS} + \delta)I_{s}(t) + \gamma_{S} I(t) \\
	R(t+1) - R(t) &= \nu_{R} I(t) + \nu_{RS} I_{s}(t) \\
	D(t+1) -  D(t) &= \delta I_{s}(t)
	\end{align*}
\end{linenomath}
in which $N$ is the population size. We assume that the time scale is that of days, so $t+1$ is the day after $t$, and consider the absolute number of individuals in each compartment, rather than the proportion of the population.

This model has a nice biological meaning and its parameters are related to characteristics of the disease. Indeed, apart from the force of infection $\beta$, they are function of the disease average exposed time $\tau_{E}$, average time infected before getting into statistics $\tau_{S}$, average time infected before recovering without being accounted by statistics $\tau_{R}$, average time before recovering after being accounted by statistics $\tau_{RS}$, average time until death once accounted by statistics $\tau_{D}$, and the proportions of infected individuals accounted by statistics $p_{S}$, and of deaths among them $p_{D}$:
\begin{linenomath}
	\begin{align*}
	\gamma_{I} = \frac{1}{\tau_{E}} & & \gamma_{S} = \frac{p_{S}}{\tau_{S}} & & \nu_{R} = \frac{(1-p_{S})}{\tau_{R}} & & \nu_{RS} = \frac{(1-p_{D})}{\tau_{RS}} & & \delta = \frac{p_{D}}{\tau_{D}}.
	\end{align*}
\end{linenomath}
Hence, after being exposed, an individual becomes infected after a time with mean $\tau_{E}$. Among the infected, a proportion $p_{S}$ will be accounted by official statistics after a time with mean $\tau_{S}$, and will not be able to infect any more, while a proportion $1-p_{S}$ will recover after a time with mean $\tau_{R}$ without ever being accounted by official statistics. Then, a proportion $p_{D}$ of the Infected in Statistics will die after a time with mean $\tau_{D}$ and a proportion $1-p_{D}$ will recover after a time with mean $\tau_{RS}$. Once recovered, an individual cannot become infected again.

Apart from $p_{D}$, which can be estimated directly from data as the proportion of deaths among infected accounted by statistics, the other parameters need to be estimated. Since the rates of the model can be calculated from the mean average time in each compartment and the proportion of infected accounted by statistics, we consider
\begin{linenomath}
	\begin{equation*}
	\theta \coloneqq \left(\beta,\tau_{E},\tau_{R},\tau_{S},\tau_{RS},\tau_{D},p_{S}\right) \in \Theta = \mathbb{R}_{+}^{6} \times [0,1]
	\end{equation*}
\end{linenomath}
as the free parameters to be estimated. The candidate values for each parameter are presented in Table \ref{tCandidate}. The number of candidate models is
\begin{linenomath}
	\begin{equation*}
	|\mathcal{Z}| = 116,121,600
	\end{equation*}
\end{linenomath}
since $\mathcal{Z}$ is a grid of $\Theta$ generated by the values in Table \ref{tCandidate}.

\begin{table}
	\centering
	\caption{Candidate parameters of the model} \label{tCandidate}
	\resizebox{\linewidth}{!}{\begin{tabular}{cc}
			\hline
			Parameter & Candidates \\
			\hline
			$\beta$ & $0.05,0.1,0.15,0.2,0.25,0.3,0.35,0.4,0.45,0.5,0.6,0.7,0.8,0.9,1,1.1,1.2,1.3,1.4,1.5$\\
			$\tau_{E}$ & $4,5,6,7$ \\
			$\tau_{R}$ & $5,6,7,\dots,14$\\
			$\tau_{S}$ & $3,4,5,\dots,14$\\
			$\tau_{RS}$ & $5,6,7,\dots,28$\\
			$\tau_{D}$ & $1,2,3,\dots,28$\\
			$p_{S}$ & $0.01,0.025,0.05,0.075,0.1,0.15,0.2,0.25,0.3,0.35,0.4,0.5,0.6,0.7,0.8,0.9,0.95,0.99$\\
			\hline
	\end{tabular}}
\end{table}

In order to illustrate the method, we estimate the parameters multiple times considering goodness-of-fit during consecutive periods of 7 days starting on March 20th 2020, i.e., we apply the method for multiple $t_{0}$, with 7 days apart from one to the next, starting on March 20th until mid-July. Goodness-of-fit is accomplished by testing the predicted evolution against data about COVID-19 spread in the US compiled by Johns Hopkins Coronavirus Resource Center \cite{dong2020}. The observed data is smoothed by taking a centred seven day average of the incidence of confirmed cases, deaths and recovered, and the prevalences calculated by summing the smooth incidences. 

Since only two compartments, namely Infected in Statistics and Deaths, are fully observed, the goodness-of-fit of a candidate model should be defined comparing its predicted with the observed values in these compartments. Denote $X(t) = (I_{s}(t),D(t))$ and $X_{\theta}(t) = (I_{s}(t;\theta),D(t;\theta))$ the observed and predicted by model with parameter $\theta$, respectively, Infected in Statistics and Deaths at time $t$. With this notation, the fitness map is
\begin{linenomath}
	\begin{equation*}
	F(\theta,\{X(t)\}_{t=t_{0}}^{t_{0}+6}) = \min\limits_{t \in \{t_{0},\dots,t_{0}+6\}}  \mathds{1}\left\{\left\lVert \frac{X_{\theta}(t) - X(t)}{X(t)} \right\rVert_{\infty} \leq r(t_{0})\right\}
	\end{equation*}
\end{linenomath}
for each $t_{0}$ considered and $0 < r(t_{0}) < 1$, in which division by $X(t)$ is component-wise. Hence, a parameter $\theta$ well fit the observed data if the predicted value of Infected in Statistics and Deaths is within $100r(t_{0})\%$ of the respective observed value for everyday in the week starting on $t_{0}$. The value of $r$ must depend on $t_{0}$ since there are moments when the disease spread is better explained by a SEIR model; for example when few measures are implemented to slow its evolution, as will be seen below. 

We estimated the model every seven days in order to asses the peak of deaths if the disease were to keep spreading as in the week starting on $t_{0}$. Due to the enforcement of measures to flatten the epidemiological curve, the parameters may actually be time-dependent, specially the force of infection $\beta$. Hence, estimating the peak assessing goodness-of-fit only against the first weeks of the disease spread is doomed to fail, since the force of infection changes from time to time. 

On the other hand, by fitting the model every week, one can see changes on the peak estimate over time, which will reflect the enforcement or looseness of measures to slow the spread, providing a robust and better picture of the future. Observe that the method provides evidence about the disease spread if it were to continue as in the week when goodness-of-fit was assessed, not providing forecasts of distinct scenarios.

To estimate the parameters and then simulate the good models starting on $t_{0}$, we need to estimate the initial condition for compartments $E$ and $I$, which are not observed, and $R$ which is only partially observed. Assume that we fixed a vector $\theta$ of candidate parameters. Denoting $R_{s}$ as the recovered among the accounted by statistics, we take
\begin{linenomath}
	\begin{align*}
	I(t) = \frac{(1-p_{S})}{p_{S}} I_{s}(t) & & R(t) =  \left[\frac{(1-p_{S})}{p_{S}} + 1\right] R_{s}(t).
	\end{align*}
\end{linenomath}
Proceeding in this manner, we have that among all infected ($I + I_{s}$) and recovered ($R$), a proportion $p_{S}$ is accounted by statistics.

In order to estimate the exposed, we obtain from the equations ruling the evolution of the disease that
\begin{linenomath}
	\begin{align*}
	I(t+1) = - \left(\nu_{R} + \gamma_{S} - 1\right) I(t) + \gamma_{I} E(t)
	\end{align*}
\end{linenomath}
hence
\begin{linenomath}
	\begin{align*}
	E(t) &= \frac{1}{\gamma_{I}} \left[I(t+1) + (\nu_{R} + \gamma_{S} - 1) I(t)\right]
	\end{align*}
\end{linenomath}
so we calculate $E(t_{0})$ from $I(t_{0})$ and $I(t_{0}+1)$. We take $p_{D}$ as the moving death rate considering only the confirmed cases and deaths according to the smoothed data on the week ending on $t_{0}$, so it expresses the death rate around $t_{0}$. Observe that each candidate model has a distinct initial condition on the unobserved compartments, dependent on its parameters. 

The error $r(t_{0})$ was determined by sampling $100,000$ models and observing the minimum difference 
\begin{linenomath}
	\begin{equation*}
	\min\limits_{t \in \{t_{0},\dots,t_{0}+6\}} \left\lVert \frac{X_{\theta}(t) - X(t)}{X(t)} \right\rVert_{\infty}
	\end{equation*}
\end{linenomath}
over the sampled $\theta$, which gives an estimative of \textit{how well} the SEIR model approximates the evolution on the week starting on $t_{0}$. The error $r(t_{0})$ was then taken as $1.1$ times this minimum difference. We could not choose an absolute $r$ beforehand since there is not a model approximating \textit{arbitrarily well} the evolution for all $t_{0}$ as it does not follow exactly a SEIR model. 

The values of $r(t_{0})$ were around $0.03$ until April 10th when it dropped to around $0.01$. On May and until mid-June the values of $r(t_{0})$ were more or less 0.07, dropped to $0.03$ on the second half of June and then to $0.01$ on the first half of July. Fixed $r(t_{0})$, for each $t_{0}$ there were sampled $500,000$ models.

Figure \ref{par_US} shows the box-plot of the parameters of the good models for each $t_{0}$, and the number of good models ($m$) among the sampled ones. Figure \ref{curve_US} shows the Infected in Statistics and Deaths according to the smoothed data and simulated by the good models for the seven days starting on their $t_{0}$.

We first see that the distribution of $\tau_{E}, \tau_{R}$ and $\tau_{RS}$, which are mainly related to features of the disease, does not vary very much over time. Also, we see that the good models better fit observed data when implemented measures to slow the spread were not effective, namely, until the end of April, when it was spreading mainly in the Northeast region, and from mid-June when it started spreading rapidly in other states, specially in the South. This is reasonable, since SEIR models assume that every individual which has not ever been infected is susceptible to the disease and that any contact between an infected and a susceptible has a same probability of infection, which is not true if part of the population is on lockdown or if social distance and other measures, such as mask wearing, are implemented. Hence, it is expected a poor performance of the model when such measures are implemented, at least in what regards the number of infected and deaths.

\begin{figure}[ht]
	\centering
	\includegraphics[width=\linewidth]{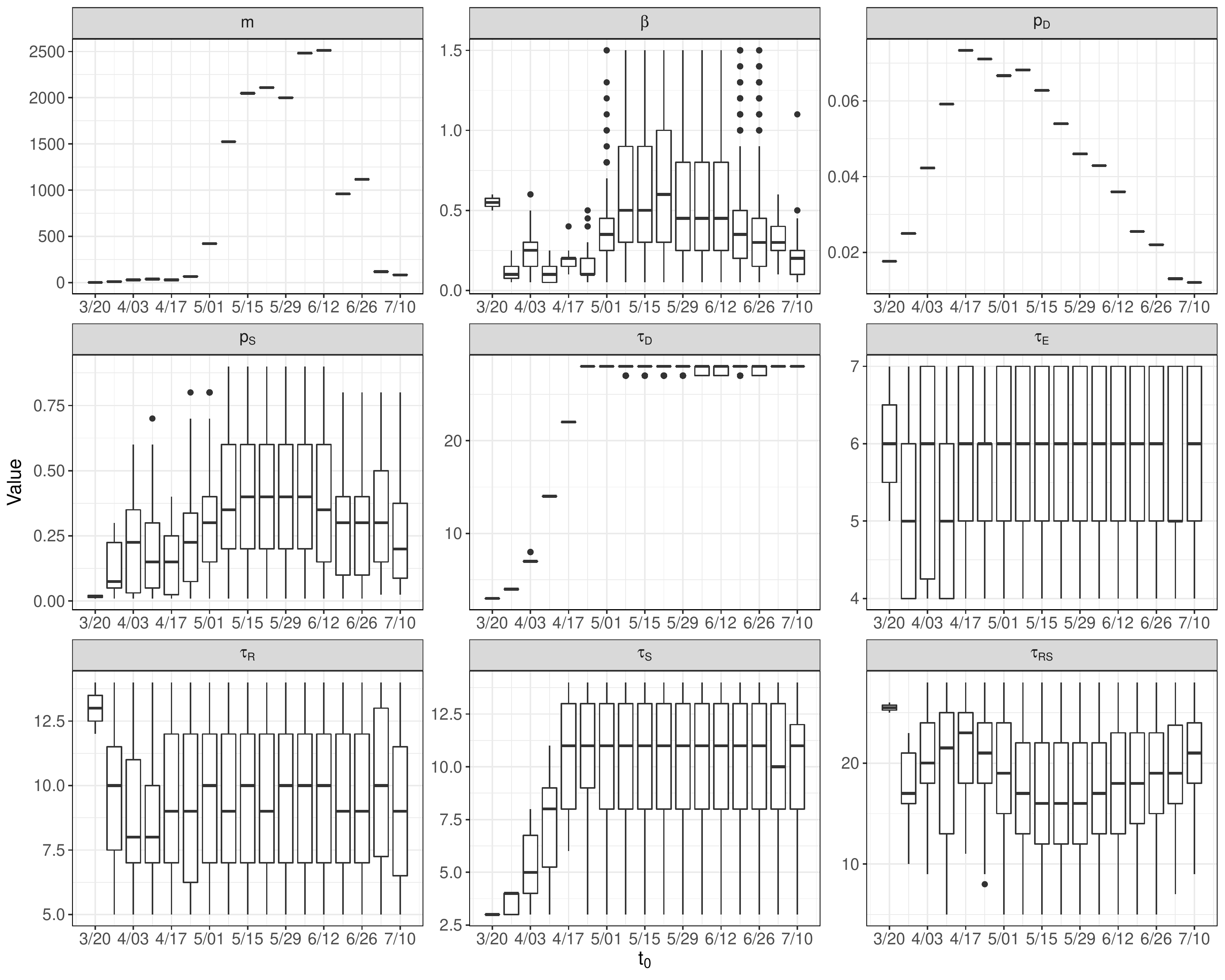}
	\caption{Number of good parameters among the $500,000$ sampled ($m$) and box-plot of the set of good parameters sampled for each $t_{0}$. The value of $p_{D}$ is calculated directly from data hence is the same for all models.} \label{par_US}
\end{figure}

\begin{figure}[ht]
	\centering
	\includegraphics[width=\linewidth]{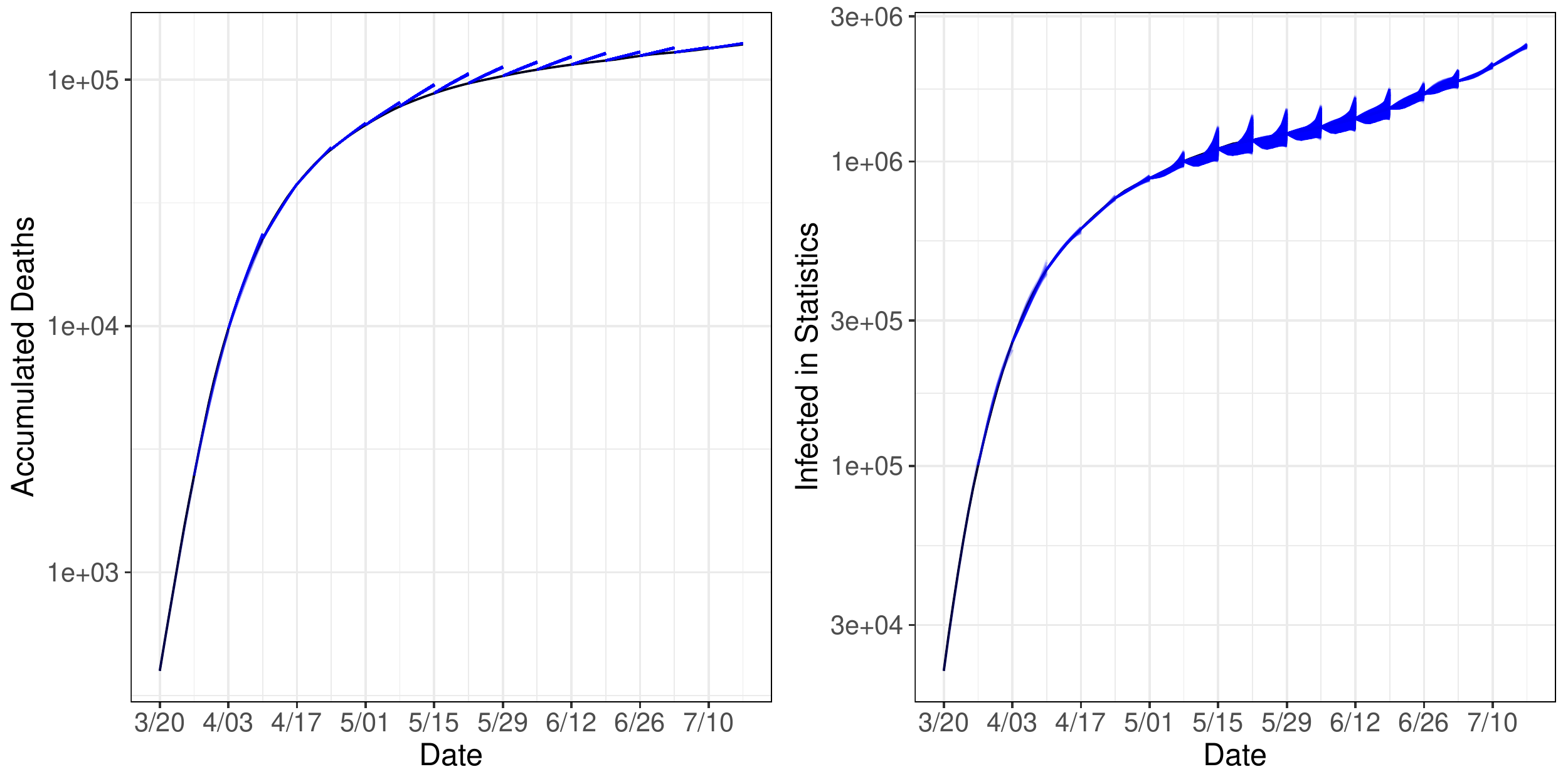}
	\caption{Infected in Statistics and Deaths according to the smoothed data (solid line) and the good models (blue lines). The good models were simulated only on the seven days starting on their $t_{0}$.} \label{curve_US}
\end{figure}

The distribution of the estimated parameters over time evidences interesting qualitative features of the disease spread in the US. The moving rate of death $p_{D}$ started low, but more than tripled in the following weeks, then slowly decreased over time, what may have been due to the increase on testing or the fact that deaths among the new infected is yet to occur \cite{times1}. 

The number of good models was greater when the disease was spreading more slowly, what is due to the employment of a more loose fitness map, since a SEIR model could not \textit{well fit} data, in a absolute sense, in this period, as can be seen in the simulated and observed curves of Figure \ref{curve_US}.

The values of $\tau_{S}$ and $\tau_{D}$ started small and then increased until an equilibrium around $11$ and $27$ days, respectively. This may be due to the fact that early on there was a lot of people dying at home, specially on nursing homes \cite{times2}, so they died rapidly due to underlying health problems, not having the time to procure health assistance, so their time infected before getting into statistics was small, and they were accounted as infected in statistics and deaths at the same time, so their time until death after being accounted by statistics was zero; these cases cause the mean times $\tau_{S}$ and $\tau_{D}$ to be smaller. 

Observe that, when the model was not absolutely well fitting data, the range of most estimated parameters contemplated all the candidates, what tells us that, with a right combination of parameters, any given value of a parameter can generate an evolution that fit the data, according to the loose definition of well fit. On the other hand, when the model was absolutely well fitting the data, until the end of April and from mid-June, the parameters are more meaningful. This is specially true for $p_{S}$, which started low until mid-April, and when the disease spread increased once more at mid-June it was again low, but greater than on March and April, illustrating the increase on testing. 

Finally, the parameter $\beta$ tended to be greater on mid-June than on the time until the end of April, evidencing that the force of infection in the second moment of fast spread is greater; the values when the model was not fitting data are not meaningful since vary over the range of candidate parameters.

Figure \ref{peak_US} presents the median and selected percentiles of the peak of deaths, i.e., the day with more deaths after $t_{0}$, estimated by the good models in each $t_{0}$. The models fitted between March 20th and April 3rd predicted that the the peak could happen around April 15th, which was the peak observed on the smoothed data, since the percentile 2.5\% of them was on this day. Also, for $t_{0}$ equal to April 10th, the median of the models was pointing to a close peak, although it got it wrong by one week. After the first peak, for some $t_{0}$ it was predicted as a good scenario no more peaks (when the peak is in $t_{0}$), but then, for the months in which the disease was spreading more slowly, the peak was predicted for a sequence of $t_{0}$ as around one month after it. This was the behaviour until the end of June when the peak started to be predicted as more distant, evidencing the future increase on the incidence of deaths in the mid-term before it starts decreasing.

\begin{figure}[ht]
	\centering
	\includegraphics[width=\linewidth]{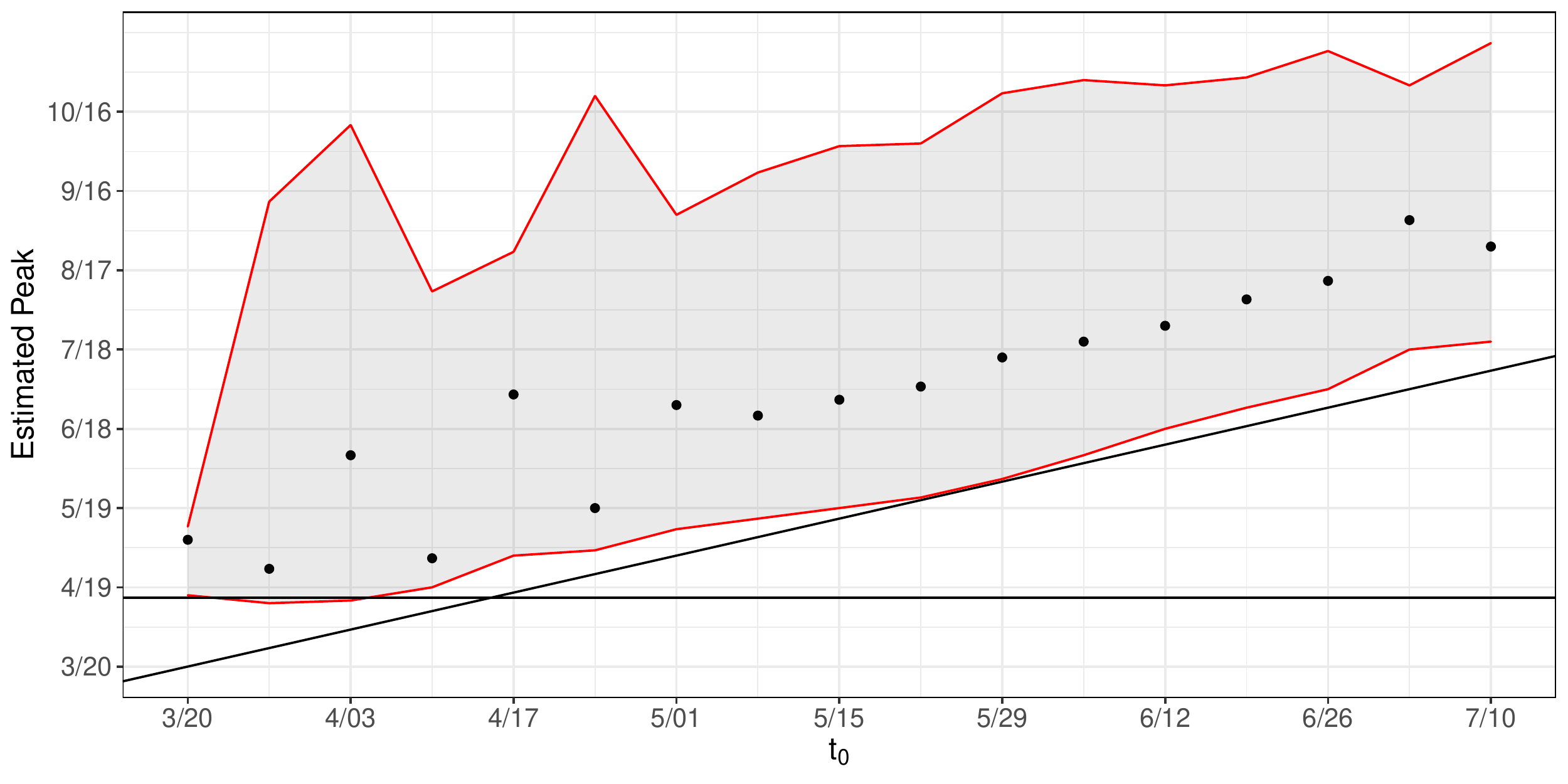}
	\caption{Median estimate for the peak of deaths among the good models sampled for each $t_{0}$ (dots), with the percentiles 2.5 and 97.5\%, represented in red. The horizontal line represents the first peak of deaths according to the smoothed data. The solid line is the identity function.} \label{peak_US}
\end{figure}

This example illustrates how such a simple estimation method may offer many insights into qualitative properties of the disease spread, aiding in assessing its current evolution and predicting its future spread. By fitting this model every week during an outbreak, one may analyse its predictions and estimated parameters over time to assess the effectiveness of measures implemented to slow the spread and some qualitative features of the disease current and future evolution. 

\section{Final Remarks}
\label{Sec5}

The estimation method for dynamical systems parameters proposed in this paper aims to find a subset $\hat{\Theta} \subseteq \Theta$ of parameters that well fit the observed evolution in a time interval. This is a more humble agenda when compared with methods that try to find exactly the parameter which generated the evolution, and is quite robust when evidencing qualitative behaviour of evolutions, specially ones that are only approximated by dynamical systems, as was illustrated by the study of the COVID-19 spread on the US.

In essence, the proposed method is a mathematical formalization of the procedure consisting of trial and error for parameter estimation, in which the evolution generated by candidate parameters are tested against observed data to determine which well fit it. The Statistical Learning framework introduces a mathematically rigorous scheme to this general approach for parameter estimation, reinterpreting an ABC method.

The method is quite general, as it does not rely on any specific property of the system, and highly flexible, since one may define the set of candidate parameters $\mathcal{Z}$, the fitness map $F$ and select any sampling algorithm on $\mathcal{Z}$ to obtain distinct methods for parameter estimation. An interesting topic for future research is to consider $\mathcal{Z}$ with infinity cardinality, but choose $\mathcal{H}$ and $F$ in a way such that $\mathcal{H}$ has finite VC-dimension, so results analogous to Theorem \ref{bound_error} are true. With such a result, one could consider sampling parameters from a set with infinity cardinality relying on results of Statistical Learning.

The method may be quite useful for disease spread models which have the properties it was developed to address: disease spread does not exactly follow a compartmental model and there is usually an interest in the qualitative behaviour of the evolution, rather than in its exact trajectory. These qualitative properties may aid the population and government officials in the decision making process during an outbreak, hence, as illustrated with the COVID-19 spread in the US, fitting a compartmental model weekly may be a rich source of information about the disease spread, even if it cannot give reliable predictions on the number of cases and deaths for the long-term.

The shortcomings of the method are the need for high computational power to simulate the evolution generated by thousands of parameters, and the need to carefully define $\mathcal{Z}$ beforehand, properly identifying the set of candidate parameters. Nevertheless, it may be a good option to robustly estimate parameters in complex models, specially when there is an interest in the qualitative behaviour of the trajectories.

\section*{Acknowledgements}

I thank C. Peixoto, P. Peixoto and S. Oliva for fruitful discussions about the modelling of disease spread, specially COVID-19, and about the development of compartmental models to address it. The author has received financial support from CNPq during the development of this paper.

\bibliographystyle{vancouver}      
\bibliography{ref}   

\appendix
\section{Proof of results}

\begin{proof}[Proof of Theorem \ref{bound_error}]
	Observe that $L_{\mathcal{D}_{N}}(\hat{h}) = 0$, since at least $L_{\mathcal{D}_{N}}(h^{\star}) = 0$. Now, if $L(\hat{h}) > \epsilon$, then 
	\begin{linenomath}
		\begin{equation*}
		\max\limits_{h \in \mathcal{H}: L_{\mathcal{D}_{N}}(h) = 0} L(h) > \epsilon
		\end{equation*}
	\end{linenomath}
	hence $\mathbb{P}\left(L(\hat{h}) > \epsilon\right)$ is lesser or equal to
	\begin{linenomath}
		\begin{equation*}
		\mathbb{P}\left(\max\limits_{h \in \mathcal{H}: L_{\mathcal{D}_{N}}(h) = 0} L(h) > \epsilon\right) = \mathbb{E}\left(\max\limits_{h \in \mathcal{H}} \mathds{1}\left\{L_{\mathcal{D}_{N}}(h) = 0\right\} \mathds{1}\left\{L(h) > \epsilon\right\}\right).
		\end{equation*}
	\end{linenomath}
	Applying the union bound on the last expectation above we obtain that
	\begin{linenomath}
		\begin{equation*}
		\mathbb{P}\left(L(\hat{h}) > \epsilon\right) \leq \sum_{h \in \mathcal{H}: L(h) > \epsilon} \mathbb{P}\left(L_{\mathcal{D}_{N}}(h) = 0\right) \leq |\mathcal{H}| (1- \epsilon)^{N}
		\end{equation*}
	\end{linenomath}
	since the probability that no pair $(Z_{i},Y_{i}), i = 1,\dots,N,$ falls in the set $\{(z,y): h(z) \neq y\}$ is less than $(1-\epsilon)^{N}$, as the probability of this set is $L(h) > \epsilon$ and the pairs are independent. The result follows since $(1-\epsilon) \leq e^{-\epsilon}$.
\end{proof}

\begin{proof}[Proof of Proposition \ref{prop_improve}]
	From the proof of Theorem \ref{bound_error} we have that
	\begin{linenomath}
		\begin{equation*}
		\mathbb{P}\left(L(\hat{h}) > cG(\{X_{\theta^{\star}}\}_{t = 1}^{T})\right) \leq \left|\{h \in \mathcal{H}: L(h) > cG(\{X_{\theta^{\star}}\}_{t = 1}^{T})\}\right| \ e^{-cG(\{X_{\theta^{\star}}\}_{t = 1}^{T}) N}.
		\end{equation*}
	\end{linenomath}
	Now, if $L(h) > cG(\{X_{\theta^{\star}}\}_{t = 1}^{T})$ then at least $cp$ of the good parameters are not in the subset generated by $h$ since measure $G(\{X_{\theta^{\star}}\}_{t = 1}^{T})$ is uniformly spread among the good parameters. As there are
	\begin{linenomath}
		\begin{equation*}
		\sum_{k=0}^{(1-c)p} \binom{p}{k}
		\end{equation*}
	\end{linenomath}
	subsets with no more than $(1-c)p$ good parameters the first assertion follows. To show the second assertion, by applying Corollary \ref{sample_size} we have
	\begin{linenomath}
		\begin{align}
		\label{eq_proof}
		m(c,\delta) &= \frac{1}{cG(\{X_{\theta^{\star}}\}_{t = 1}^{T})} \left[\log \sum_{k=0}^{(1-c)p} \binom{p}{k} - \log \delta\right].
		\end{align}
	\end{linenomath}
	Since the left-hand side of \eqref{eq_proof} is lesser than
	\begin{linenomath}
		\begin{equation*}
		\frac{1}{cG(\{X_{\theta^{\star}}\}_{t = 1}^{T})} \left[\log \binom{p}{(1-c)p} + \log\left(1 + \frac{(1-c)^{2} p^{2}}{cp + 1}\right) - \log \delta \right]
		\end{equation*}
	\end{linenomath}
	by Proposition \ref{bound_log}, we have the result.
\end{proof}

\section{Auxiliary results}

\begin{proposition}
	\label{bound_log}
	For $n \geq 2p$ integers
	\begin{linenomath}
		\begin{equation}
		\label{eq}
		\log \sum_{k=0}^{p} \binom{n}{k} \leq \log \binom{n}{p} + \log\left(1 + \frac{p^2}{n - p + 1}\right)
		\end{equation}
	\end{linenomath}
\end{proposition}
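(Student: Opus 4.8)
The plan is to exponentiate and prove the equivalent form
\begin{linenomath}
	\begin{equation*}
		\sum_{k=0}^{p} \binom{n}{k} \leq \binom{n}{p}\left(1 + \frac{p^{2}}{n-p+1}\right),
	\end{equation*}
\end{linenomath}
since \eqref{eq} follows by taking $\log$ of both sides. The case $p = 0$ (or $p=1$) is immediate, so I would assume $p \geq 1$. Note also that $n \geq 2p$ guarantees $n - p + 1 \geq p+1 \geq 1 > 0$, so every quantity below is well defined and positive.

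The key observation is a bound on the ratio of consecutive binomial coefficients: for $1 \leq k \leq p$,
\begin{linenomath}
	\begin{equation*}
		\frac{\binom{n}{k-1}}{\binom{n}{k}} = \frac{k}{n-k+1},
	\end{equation*}
\end{linenomath}
and the right-hand side is increasing in $k$ (its derivative in $k$ is $(n+1)/(n-k+1)^{2} > 0$), hence at most $r := p/(n-p+1)$. Since $n \geq 2p$, we have $r \leq p/(p+1) < 1$. Iterating this bound downward from $k = p$ yields $\binom{n}{p-j} \leq r^{\,j}\binom{n}{p}$ for every $0 \leq j \leq p$.

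Summing and re-indexing with $k = p - j$ then gives
\begin{linenomath}
	\begin{equation*}
		\sum_{k=0}^{p}\binom{n}{k} = \sum_{j=0}^{p}\binom{n}{p-j} \leq \binom{n}{p}\sum_{j=0}^{p} r^{\,j} = \binom{n}{p}\left(1 + r\sum_{j=0}^{p-1} r^{\,j}\right).
	\end{equation*}
\end{linenomath}
Because $0 \leq r < 1$, the geometric partial sum satisfies $\sum_{j=0}^{p-1} r^{\,j} \leq p$, so the right-hand side is at most $\binom{n}{p}(1 + pr) = \binom{n}{p}\bigl(1 + \tfrac{p^{2}}{n-p+1}\bigr)$, which is exactly the target inequality. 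Taking logarithms completes the argument.

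I do not anticipate a genuine obstacle here; the only point requiring a little care is resisting the temptation to bound $\sum_{j=0}^{p} r^{\,j}$ by $\frac{1}{1-r}$, which would give the weaker factor $\frac{n-p+1}{n-2p+1}$. The right move is the crude estimate $1 + r + \dots + r^{p-1} \leq p$ together with the identity $pr = \frac{p^{2}}{n-p+1}$, which is precisely what makes the series collapse into the stated closed form.
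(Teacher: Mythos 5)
Your proof is correct and follows essentially the same route as the paper: after dividing through by $\binom{n}{p}$, both arguments bound each of the $p$ lower terms by $p/(n-p+1)$ (the paper via monotonicity $\binom{n}{k}\leq\binom{n}{p-1}$ for $k\leq p-1$ under $n\geq 2p$, you via the consecutive-ratio bound iterated to $\binom{n}{p-j}\leq r^{j}\binom{n}{p}$) and then use the same crude estimate that $p$ such terms contribute at most $p^{2}/(n-p+1)$. Your geometric intermediate bound is marginally sharper, but since you relax $\sum_{j=0}^{p-1}r^{j}$ to $p$ it lands on exactly the paper's inequality.
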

\begin{proof}
	By multiplying and dividing each term of the sum by $\binom{n}{p}$ we obtain that the expression in left-hand side of \eqref{eq} is equal to
	\begin{linenomath}
		\begin{equation*}
		\log \binom{n}{p} + \log \sum_{k=0}^{p} \frac{\binom{n}{k}}{\binom{n}{p}}.
		\end{equation*}
	\end{linenomath}
	Since for every $0 \leq k \leq p-1$
	\begin{linenomath}
		\begin{align*}
		\frac{\binom{n}{k}}{\binom{n}{p}} & \leq  \frac{\binom{n}{p-1}}{\binom{n}{p}} = \frac{p}{n - p + 1}
		\end{align*}
	\end{linenomath}
	the result follows from inequality
	\begin{linenomath}
		\begin{equation*}
		\sum_{k=0}^{p} \frac{\binom{n}{k}}{\binom{n}{p}} \leq 1 + \frac{p^2}{n - p + 1}.
		\end{equation*}
	\end{linenomath}
\end{proof}

\end{document}